\newtheorem{theorem}{Theorem}[section]
\newtheorem{lemma}[theorem]{Lemma}
\newtheorem{proposition}[theorem]{Proposition}
\newtheorem{definition}[theorem]{Definition}
\newcommand{\shortversion}[1]{}
\newcommand{\s}{{s}}
\newcommand{\valueupper}{{\overline W}}
\title{
On the Computational Complexity of Mechanism Design\\ in Single-Crossing Settings
}
\author{Moshe Babaioff \thanks{Hebrew University of Jerusalem. Email: moshe.babaioff@mail.huji.ac.il.
}  \and Shahar Dobzinski \thanks{Weizmann Institute of Science. Email: shahar.dobzinski@weizmann.ac.il. Supported by ISF grant 2185/19 and BSF-NSF grant (BSF number: 2021655, NSF number: 2127781).}
\and Shiri Ron\thanks{Weizmann Institute of Science. Email:  shiriron@weizmann.ac.il. Supported by an Azrieli Foundation fellowship, ISF grant 2185/19, and BSF-NSF grant (BSF number: 2021655, NSF number: 2127781).}.
}
\date{\today} 
\begin{document}


\maketitle
\thispagestyle{empty}

\begin{abstract}
We explore the performance of polynomial-time incentive-compatible mechanisms in single-crossing domains. Single-crossing domains were extensively studied in the economics literature. Roughly speaking, a domain is single crossing if monotonicity characterizes incentive compatibility (intuitively, an algorithm is monotone if a bidder that ``improves'' his valuation is allocated a better outcome).
That is, single-crossing domains are the standard mathematical formulation of domains that are informally known as ``single parameter''.
In all major single-crossing domains studied so far (e.g., welfare maximization in various auctions with single-minded bidders, makespan minimization on related machines), the performance of the best polynomial-time incentive-compatible mechanisms matches the performance of the best polynomial-time non-incentive-compatible algorithms. Our two main results make progress in understanding the power of incentive-compatible polynomial-time mechanisms in single-crossing domains:
\begin{itemize}
\item  We provide the first proof of a gap in the power of polynomial-time incentive-compatible mechanisms and polynomial-time non-incentive-compatible algorithms in any single-crossing domain: we present an objective function in a single-crossing multi-unit auction 
for which there exists an incentive-compatible mechanism that exactly optimizes the objective function.
For this objective function, there is polynomial-time algorithm that provides an approximation ratio of  $\frac{1}{2}$,
yet no polynomial-time incentive-compatible mechanism provides a finite approximation (under standard computational complexity assumptions). 

\item The objective function used above is not natural. We show that to some extent this is unavoidable by providing a sweeping positive result for the most natural objective function in multi-unit auctions, that of welfare maximization. 
We present an 
incentive-compatible FPTAS  mechanism for every multi-unit auction with single-crossing domains.
This improves over the mechanism of Briest et al. [STOC'05] that only applies to the much simpler case of single-minded bidders.
\end{itemize}
\end{abstract}

\pagebreak

\section{Introduction}
Taking a bird's eye view of Algorithmic Mechanism Design, one can observe that each problem in the field is traditionally classified to one of two major categories: ``single parameter'' or ``multi-parameter''. As we later discuss in-depth, there is no consensus on which domains constitute a multi-parameter domain and which are not, nor a general agreement on the precise mathematical definition of each domain category. 
However, for the sake of getting an intuitive feel, an agent is a ``single parameter'' if a single number can represent its valuation. For example, some papers (e.g., \cite{lehmann-ocollaghan-shoham, archer-papadimitriou-talwar-tardos}) consider combinatorial auctions with single-minded bidders, where the parameter is the value of the set that the bidder is interested in. Other papers consider single-minded bidders in a multi-unit environment (e.g., \cite{Mualem-nisan,briest}). A more involved environment is scheduling with related machines, introduced by Archer and Tardos \cite{AT01}, in which 
that parameter of each machine is its speed. 
In each of these domains there exists an incentive-compatible mechanism that optimizes the objective function exactly, but this mechanism is NP-hard to compute.
Yet, the performance of the best computationally-efficient incentive-compatible mechanisms matches the performance of the best computationally-efficient non-incentive-compatible algorithms.

By no coincidence, the references above can be considered as a list of 
greatest hits of the early days of Algorithmic Mechanism Design. Indeed, the conventional wisdom is that single-parameter domains are ``easy'' in the sense that they allow for a plethora of incentive-compatible algorithmic constructions. Roughly speaking, the algorithm must obey a relatively easy-to-satisfy monotonicity condition to be implemented in an incentive-compatible way.
In contrast, in multi-parameter domains -- informally defined as all domains for which describing the valuation requires two numbers or more -- there is no analogous easy-to-work-with condition. The ``hardness'' of multi-parameter domains is supported by the relative lack of applicable algorithmic constructions, limiting characterizations (e.g.,\cite{roberts,lavi-mualem-nisan}), as well as by a series of computational impossibility results (e.g., \cite{D11,Dob16b, weinberg-comm-impossibility}). 

The algorithmic success of single-parameter mechanism design raises the question of whether polynomial-time incentive-compatible mechanisms in single-parameter domains are as powerful as their non-incentive-compatible algorithmic counterparts (when the objective function can be exactly optimized by an incentive-compatible mechanism that is not computationally efficient under standard complexity assumptions
\footnote{This paper considers worst-case analysis. However, the situation is quite different in the Bayesian world: a line of work that has culminated in \cite{dughmi-hartline-kleinberg-niazadeh} shows how to reduce any computationally efficient Bayesian welfare-maximizing algorithm to a comparable incentive-compatible mechanism. This demonstrates the ``easiness'' of mechanism design in Bayesian settings, as this reduction applies even to multi-parameter settings. In contrast, the impossibility results discussed above show that there are no similar reductions in our non-Bayesian setting.}). This is also the main question that we consider in this paper. 
We currently do not even have a candidate domain nor an objective function that might illustrate a separation between the two. 
In all relevant domains that were considered in the literature, the performance of polynomial-time incentive-compatible mechanisms matches the performance of polynomial-time non-incentive-compatible algorithms. 

\subsubsection*{Single-Crossing Domains}

Before we begin our journey in single-parameter domains, it seems appropriate to lay stable mathematical foundations for the discussion. The 
naive attempt to define 
single-parameter domains as ones in which the valuation can be represented by a single number \cite{wiki-single} does not hold water as, e.g., 
multiple numbers can be encoded by a single number.

We start by revisiting the basic mechanism design definitions. There are $n$ players, and a finite set of alternatives $\mathcal A$. Each player $i$ has a valuation function $v_i:\mathcal A \rightarrow \mathbb R$ that specifies the value of the player for every possible alternative.
The set of possible valuations of player $i$ is $V_i$.
A (deterministic, direct) \emph{mechanism} is a pair $(f,P)$ where $f:\prod_{i=1}^n  V_i\rightarrow A$ is a \emph{social-choice function} that specifies the chosen alternative, and $P:\prod_{i=1}^n  V_i\rightarrow \mathbb R^n$ outputs a vector that specifies the payment of
each player $i$. 
A mechanism $(f,P)$ implements a social-choice function $f$ in dominant strategies if for every two valuations $v_i, v'_i$ of player $i$ and valuations $v_{-i}$ of the other players, it holds that $v_i(f(v_i,v_{-i}))-P_i(v_i,v_{-i}) \geq v_i(f(v'_i,v_{-i}))-P_i(v'_i,v_{-i})$.
For concreteness, consider the following settings:
\begin{itemize}
\item ``Binary Single-Parameter Domains'': 
for each player $i$ there is a set of \textquote{good} alternatives $\mathcal A_i\subseteq \mathcal A$ and a number $t_i>0$ such that $v_i(a)=t_i$ if 
$a\in \mathcal A_i$,
whereas $v_i(a)=0$ otherwise.
The standard example for a binary single parameter domain is the domain of \emph{known single-minded bidders}. 

\item ``Linear Single-Parameter Domains'': for each player $i$ there is a vector $(c_1,\ldots, c_{\mathcal A})$ such that for each $v\in  V_i$ there is a  number $t_v$ such that for every alternative $a\in \mathcal A$, $v(a)=t_v\cdot c_a$. An example for a linear single-parameter setting is scheduling with related machines.

\item ``Function Based Domains'': order the valuations in each $V_i$ in some order. For each alternative $a\in \mathcal A$, there is
a function $f^i_a:\mathbb R \rightarrow \mathbb R$ such that $v(a)=f^i_a(t_v)$ for every $v\in V_i$, where $t_v$ is the index of $v$ in the order.
\end{itemize}

As their names hint,  the first two domains are intuitively ``single parameter'': the valuations of each player $i$ are defined by a single number that naturally defines an order $\succ^v_{i}$ on the valuations. It is well known \cite{Mye81,AT01} that in these domains it holds that $f$ is dominant-strategy implementable if and only if the following ``monotonicity'' property holds: for every player $i$ and every valuations $v_i \succ^v_i v'_i$, and for every valuation profile $v_{-i}$, it holds that $v_i(f(v_i, v_{-i})) \geq 
v_i(f(v'_i, v_{-i}))$. Payment formulas for functions $f$ that are implementable in these domains are also given in \cite{Mye81,AT01}.

We have to be more careful when considering the third domain. It is not hard to see that as written, even though $f^i_a$ receives a single parameter, without any restriction on $f^i_a$ 
its definition allows to encode all possible domains.
However, for some choices of the $f^i_a$'s, it makes sense to call the valuations ``single parameter''. For example, if there are two alternatives $a,a'$ with $f^i_a(x)=x^2$ and $f^i_{a'}(x)=\sqrt x$ and the value of the remaining alternatives is identically $0$, 
then one can verify that any monotone allocation function is implementable.
Therefore, the challenge is to understand the conditions on the $f^i_a$'s that lead to ``single-parameter'' implementability.


Luckily, this question was extensively studied in the economic literature and several similar conditions are considered and used in, e.g., \cite{GUESNERIE1984,laffontmartimort2009,HERMALIN2014}.
The ``right'' generalization of single-parameter domains is often referred to as the Single-Crossing property. 
The condition was first used by  \cite{mirrlees1971} in the context of finding optimal taxation policies, and by \cite{spence1973} in the context of the role of signaling in the job market, and thus it is also called Spence-Mirrlees property.

Specifically, in this paper we use the following formulation:
a domain $V_i$ is \emph{single-crossing} if there is a total order on the valuations $\succ^v_i$ and a total order on the alternatives $\succ^a_{i}$, such that for every two valuations ${v'}\succ^v_{i} v$ and two alternatives ${a}'\succ^a_{i} a$ it holds that ${v}'({a}')-{v}'(a)\geq {v}({a}')-{v}(a)$. 
When the single-crossing property holds, then indeed monotonicity (up to tie breaking) is equivalent to implementability. Note that binary single-parameter domains and linear domains are indeed single-crossing domains.

\subsubsection*{Multi-Unit Auctions}
The goal of this paper is to understand the performance of polynomial-time dominant-strategy mechanisms
in single-crossing domains: is the power of polynomial-time dominant-strategy mechanisms for single-crossing environments weaker than the power of polynomial-time algorithms that are not necessarily incentive-compatible? We focus on multi-unit auctions, a domain which often serves as an ideal playground for understanding the computational complexity of incentive-compatible mechanisms both in single-parameter domains \cite{Mualem-nisan,briest} and in multi-parameter domains \cite{lavi-mualem-nisan,dobzinski-nisan-limitations, dobzinski-nisan-roberts}. See also a survey by Nisan \cite{nisanhandbook2015}.

In a multi-unit auction there are $m$ identical items and $n$ players. Each player $i$ has a valuation function $v_i:[m]\rightarrow \mathbb R$. As usual, we assume that the valuations are non-decreasing and normalized ($v_i(0)=0$). The single-crossing property translates in multi-unit auctions to having some order $\succ_i$ on the valuations of each player $i$ such that for each $\overline v_i \succ_i v_i$ and $\overline s>s$ it holds that $\overline v_i(\overline s)- \overline v_i(s) \geq  v_i(\overline s)-v_i(s)$. The input consists of a number $t_i$ for each player $i$ that specifies that the valuation of player $i$ is the $t_i$'th valuation according to player $i$'s order $\succ^v_{i}$ and the number of items $m$. We assume that for each player $i$ we have access to extended value queries: given an index $t$ and  number of items $s$, an extended values query returns the value $v_i^t(s)$, where $v_i^t$ is the $t$'th valuation in the single-crossing domain $V_i$ of player $i$. Ideally, the running time is polynomial in the input size $poly(\log (\Sigma_i|V_i|), \log m, n)$.

\subsection*{Our Results}
\subsubsection*{Polynomial-Time Algorithms Beat Mechanisms in Single-Crossing Domains}

Our first result is a separation result: we show that there is a single-crossing setting in which the approximation ratios that  polynomial-time dominant-strategy mechanisms can provide are strictly  worse 
than the approximation ratios provided by polynomial-time algorithms. Such separations are known for various multi-parameter domains \cite{PSS08,D11,DV12,Dob16b, weinberg-comm-impossibility}
but our separation is the first in a simpler (and thus harder to separate) ``single-parameter'' setting. It is also important to note that with the exception of \cite{DV12}, the separations above are all in the communication setting, whereas our separation is in the Turing machine model.

To prove this result we take liberty in defining our objective function. Many objective functions were considered in the literature (e.g., welfare maximization, makespan minimization, various fairness notions). Some of them cannot be implemented by any dominant-strategy mechanism (e.g., Rawlsian welfare). 
More relevant to this paper are objective functions that can be implemented by a dominant-strategy mechanism, 
but not in a computationally efficient way, like welfare maximization, or makespan minimization for related machines.
Our main negative result is a separation result for single-crossing multi-unit auction domains:

\vspace{0.1in} \noindent \textbf{Theorem: }There exists an objective function for a multi-unit auction with two single-crossing bidders such that: 
\begin{itemize}
    \item There is an incentive-compatible mechanism that exactly optimizes the objective function. 
    \item There is a polynomial-time algorithm that provides an approximation ratio of $\frac{1}{2}$ to the optimum. 
    \item Every polynomial-time incentive-compatible mechanism that provides some finite approximation to the optimum satisfies that computing its allocation rule is TFNP-hard.\footnote{Recall that TFNP  stands for "Total Function Nondeterministic Polynomial". This is the class of search problems that are guaranteed to have an answer that can be verified in polynomial time.
    For example, every problem in the classes PPAD (including computing a Nash equilibrium) and PLS is also in TFNP. }
\end{itemize}

Although this is the first separation of its kind in single-crossing domains\footnote{Related is the paper \cite{brendan} that shows the impossibility of a black box that converts any algorithm to an incentive-compatible mechanism in binary single-parameter settings. But even this result comes short of pointing out a specific setting that is hard for polynomial-time mechanisms.}, we note that it is a synthetic separation. 
The objective function we use gives score of $2$ to welfare-maximizing allocations 
(we carefully construct our domain so that deciding whether a given allocation maximizes the welfare is easy), most other allocations get a score of $0$, except a few of them which get a score of $1$.

Maximizing the objective function can be done by an incentive-compatible mechanism that computes the welfare-maximizing allocation and uses VCG payments. However, computing a welfare-maximizing allocation is TFNP hard. A polynomial-time algorithm can overcome this barrier by selecting sometimes allocations with score $1$, but we show that in this case monotonicity is not preserved and thus the algorithm cannot be implemented in dominant strategies. Thus, a monotone polynomial-time algorithm must sometimes output an allocation with score $0$, and thus does not achieve any  finite approximation to the objective function.

\subsubsection*{An Incentive-Compatible FPTAS for  Multi-Unit Auctions with Single-Crossing Valuations}

Can a similar separation be achieved for some natural objective function? For makespan minimization with related machines a long line of work established that the approximation ratio that can be achieved by polynomial-time incentive-compatible mechanisms matches what can be achieved by polynomial-time algorithms (e.g., \cite{AT01,KOV05,DDDR11,CK10}).

The other well-studied goal is welfare maximization, that is, find an allocation $(\s_1,\ldots, \s_n)$ that (approximately) maximizes $\sum_{i=1}^n v_i(\s_i)$. For single-minded bidders, Briest et al. \cite{briest} provide an incentive-compatible FPTAS, which matches the best approximation that
can be achieved by a polynomial-time algorithm\footnote{For the case of multi-unit auctions with general valuations we have a $\frac 1 2$-approximation mechanism that makes polynomially many value queries \cite{dobzinski-nisan-k-minded}, as well as some evidence that this is the best possible ratio achievable by polynomial mechanisms \cite{dobzinski-nisan-roberts}. If randomization is allowed, there exists an incentive-compatible-in-expectation FPTAS \cite{DD09} and a universally incentive-compatible PTAS \cite{VOCKING2019}.}. However, domains of single-minded bidders are a relatively simple type of single-crossing domains. Ignoring incentives issues, there is a simple FPTAS to the welfare for all domains, including single-crossing domains. Exact welfare-maximization is known to be NP-hard, regardless of incentives (see ,e.g., \cite{nisanhandbook2015}). 

Our main positive result is an incentive-compatible FPTAS for welfare maximization for all single-crossing multi-unit auctions, implying 
that for welfare maximization in single-crossing domains, polynomial-time incentive-compatible 
mechanisms are as powerful as polynomial-time algorithms.\footnote{
Similarly to \cite{briest}, our result is in fact more general and a more careful analysis shows that it applies also to unknown single minded bidders, see Appendix \ref{sec-warmup}.}
In fact, we identify single-crossing domains as the richest set of settings currently known for which incentive-compatible deterministic mechanisms are as powerful as their randomized counterparts.

\vspace{0.1in} \noindent \textbf{Theorem  (informal): }There exists an incentive-compatible FPTAS for maximizing the welfare in multi-unit auctions with single-crossing bidders. 

\vspace{0.1in} \noindent
To get some intuition for our algorithm, let us first recall the standard (non-incentive-compatible) FPTAS. That FPTAS starts by rounding the values of each player to the nearest multiple of some $\delta>0$. It then applies dynamic programming to find an exact
welfare-maximizing solution with respect to the rounded values. If the valuations are single minded, then this results in an incentive-compatible mechanism as long as the choice of $\delta$ does not depend on the values of the players. 

However, for this algorithm to give an FPTAS, the parameter $\delta$ must be chosen at the appropriate scale 
so that the rounded values are close to the original values. The challenge is that different roundings result in different allocations. That is, if we simply use the value of the player with the highest value to determine the rounding parameter $\delta$, this player might win his favorite bundle with one value but might lose it when increasing his value, and the desired monotonicity is not obtained.

\cite{briest} cleverly solve this problem of finding the right $\delta>0$ by essentially ``trying'' all possible choices of $\delta$ and choosing the one that maximizes the welfare. They show that this algorithm is incentive-compatible and can be efficiently implemented. However, we do not know how to extend this result to all single-crossing domains. Thus we start by developing an alternative incentive-compatible FPTAS for single-minded bidders.


Our construction is quite simple: we let the rounding parameter depend on the value of the player with the highest value, but we give the player with the highest value a reward when computing the best outcome using the ``standard'' dynamic program. The higher the value of the player with the highest value, the bigger the reward is. Choosing the reward appropriately, we get that a winning player keeps his winning bundle even when the rounding changes.

Next, we extend this new monotone FPTAS for single-minded bidders to single-crossing $k$-minded bidders. A $k$-minded valuation of bidder $i$ has $k$ steps (quantities for which value might increase), with steps at quantities $s^i_1,\ldots, s^i_k$, where the $s^i_k$'s are publicly known. It is clear that if we want to somehow use the standard dynamic program we somehow have to round valuations so they will all be multiples of $\delta$. To understand how to obtain the right rounding, we spell out the following principle in the design of monotone algorithms: let $S$ be a subset of the domain. Applying an allocation rule that is monotone only with respect to $S$ on the full domain by replacing every $v$ with the maximal $v'\in S$ such that $v \succ v'$, results in a monotone algorithm with respect to the full domain. For example, for a given $\delta$ in the domain of single-minded bidders, $S$ is the set of valuations rounded down to the nearest multiple of $\delta$, and the allocation rule is selecting the welfare-maximizing allocation. 

We observe that rounding the $k$-minded valuations in the most straightforward way (i.e., round down each value to the nearest multiple of $\delta$) does not seem to be very helpful since the domain of rounded down valuations is in general not single crossing and thus applying a monotone algorithm on it does not guarantee incentive compatibility. Instead, we suggest to obtain a single-crossing domain of ``rounded down'' valuations by appropriately rounding down the marginal values. Now, the allocation rule that finds a welfare-maximizing allocation (after this novel rounding and after applying our reward scheme) is monotone. This results in an incentive-compatible FPTAS for single-crossing $k$-minded bidders. Note that the running time of this algorithm depends on $k$.


The final step is obtaining an incentive-compatible FPTAS for general single-crossing domains. Ideally, we would want to prove that for $k=poly(\log m,\frac 1 \varepsilon)$, every single-crossing domain of player $i$ admits a $k$-sketch: bundles $s^i_1,\ldots, s^i_k$ such that for every bundle $j$, $s^i_t \leq j < s^i_{t+1}$ and a single-crossing valuation $v$ in the domain, $v(s^i_t) \leq v(j) \leq (1+\varepsilon)\cdot v(s^i_t)$. The existence of such $k$-sketch effectively reduces general single-crossing domains to $k$-minded domains, for which we already have an algorithm. Unfortunately, there are single-crossing domains that do not admit $k$-sketches 
of size that is sub-linear in $m$ (see Appendix \ref{app:no-sketch}).

We overcome this barrier by considering a more nuanced variant of $k$-sketches, one in which a correct sketch is guaranteed only for bundles that have large enough values. We conclude the proof by showing that such a variant exists in every possible single-crossing domain and that this restricted variant indeed provides 
a $(1-\varepsilon)$-approximation by a monotone allocation rule.

\subsection*{Open Questions}
We conclude the introduction with some open problems. One result of this paper is that there exists a problem in a single-crossing domain in which the approximation ratio achievable by polynomial-time dominant strategy mechanisms is strictly worse than the approximation ratio achievable by polynomial-time algorithms. However, the objective function used in this separation was unnatural. Perhaps the biggest question left open in this paper is to determine whether there {exists a single-crossing domain} 
for which a similar separation be achieved for the standard objective of welfare maximization. 

Understanding single-crossing combinatorial auctions might be the first step in this direction: is there a deterministic polynomial-time dominant strategy mechanism that achieves an approximation ratio of $O(\sqrt{m})$ for combinatorial auctions when the domains are single crossing?
Understanding other natural objective functions is interesting as well. For example, is there a deterministic dominant strategy PTAS for minimizing the makespan with related machines when the domains are single crossing? 

\subsubsection*{Structure of the Paper}

In Section \ref{sec-preliminaries} we give some necessary preliminaries. In Section \ref{sec-alg-k} we provide an algorithm for single-crossing $k$-minded bidders. Section \ref{sec-alg} gives an algorithm for all single-crossing domains and in Section \ref{sec-hard-new-obj} we prove a gap in the power of polynomial-time incentive-compatible mechanisms and polynomial-time algorithms. {
In Appendix \ref{sec-warmup} we present a different monotone FPTAS
for single-minded bidders than the one obtained by \cite{briest}.} 

\section{Single Crossing Domains: The Basics} \label{sec-preliminaries}



We assume that there are $n$ players and a finite set of alternatives $\mathcal A$.  Each player $i$ has a domain of valuations $V_i$, where an element of $V_i$ is a valuation $v_i:\mathcal A\to \mathbb{R}$. A (direct, deterministic) \emph{mechanism} $M=(f,P_1,\ldots,P_n)$ is composed of a social-choice function $f:V_1\times \cdots \times V_n \to \mathcal{A}$  and $n$ payment schemes $P_1,\ldots,P_n: V_1\times \cdots \times V_n \to \mathbb{R}$. 
A mechanism $M=(f,P_1,\ldots,P_n)$ is \emph{incentive-compatible} if for every player $i$, for every $v_i,v_i'\in V_i$, and for every $v_{-i}\in V_{-i}$ it holds that  
$
v_i(f(v_i,v_{-i}))
-P_i(v_i,v_{-i}) \ge
v_i(f(v_i',v_{-i}))
-P_i(v_i',v_{-i})
$.

A function $f_i:V_i\to \mathcal A$ is \emph{implementable for player $i$} if there exists a payment function $P_i:V_i\to \mathbb{R}$ such that  $
v_i(f_i(v_i))
-P_i(v_i) \ge
v_i(f_i(v_i'))
-P_i(v_i')
$ for every $v_i',v_i\in V_i$. 
Similarly, a social-choice function $f$ is \emph{implementable} if for every player $i$, and for every $v_{-i}\in V_{-i}$, the function $f(\cdot,v_{-i}):V_i\to \mathcal A$ is implementable for player $i$.



Fix some player $i$ and let $\succ^v_i$ be some total order on the valuations in $V_i$ and $\succ^a_i$ be some total order on the alternatives. A function $f_i:V_i\to \mathcal A$ is \emph{monotone for player $i$} with respect to $\succ^v_i$ and $\succ^a_i$ if for every $v_i'\succ^v_i v_i$ it holds that 
$f_i(v_i')\succeq^a_i f_i(v_i)$. 
A social-choice function $f:V_1\times \cdots \times V_n \to \mathcal A$ is \emph{monotone} if for every player $i$, and for every $v_{-i}\in V_{-i}$, the function $f(\cdot,v_{-i}):V_i\to\mathcal A$ is monotone for player $i$ with respect to some orders  $\succ^v_i$ and $\succ^a_i$.

\begin{definition}
Let $V_i$ be a set of valuations, $\succ^v_i$ an order on $V_i$, and $\succ^a_i$ an order on the alternatives.   
The domain $V_i$ is a \emph{single-crossing domain} with respect to  $\succ^v_i$ and $\succ^a_i$, if for every $v_i'\succ^v_i v_i\in V_i$ and for every $a'\succ^a_i a\in \mathcal A$ it holds that  $v_i'(a')-v_i'(a)\ge v_i(a')-v_i(a)$. 
\end{definition}
Note that our definition is very similar but slightly more permissive than other definitions, e.g., of \cite{HERMALIN2014} (namely, weak inequality and not a strict one).
We observe that using our definition, single-crossing domains are exactly the set of domains where ``monotonicity implies implementability".
Towards this end, we recall that in an incentive-compatible mechanism, by the taxation principle, each alternative has price for each player that does not depend on the valuation of the player. Thus, we slightly abuse notation and use $P_i(a, v_{-i})$ to denote the price of player $i$ for alternative $a$, given the valuations $v_{-i}$ of the other players. In what follows, we use $P_i(a)$ when $v_{-i}$ is clear from the context. 
Proofs from this section (which are similar to other proofs that appear in the literature) can be found in Appendix \ref{sec-missing-proofs-prem}:
\begin{definition}
Fix some player $i$ and let $V_i$ be his set of valuations. 
Let $\succ^v_i$ be some total order on $V_i$ and $\succ^a_i$ be some total order on the alternatives. $V_i$ is a \emph{monotone-implementability} domain
with respect to $\succ_i^v$ and $\succ_i^a$,
if every
function that is monotone for player $i$ with respect to  $\succ^v_i$ and $\succ^a_i$ is also implementable for player $i$.
\end{definition}

\begin{proposition}\label{prop-mono-iff-sc}
    Fix some player $i$ and let $V_i$ be his set of valuations. 
Let $\succ^v_i$ be some total order on $V_i$, and $\succ^a_i$ some total order on the alternatives. 
     $V_i$ is a monotone-implementability domain with respect to $\succ_i^{v}$ and $\succ_i^{a}$ if and only if it is also single-crossing with respect to $\succ_i^{v}$ and $\succ_i^{a}$.
     Moreover, let $f$ be a monotone social-choice function and let $a_j$ denote the $j$'th alternative (according to $\succ^a_i$) in the image of $f$. A payment function that implements $f$ is given by:
     $$
     P_i(a_j)-P_i(a_{j-1})=\inf_{v_i\in V_i}\{v_i(a_j)-v_i({a_{j-1}})\hspace{0.2em}|\hspace{0.2em} f(v_i)=a_j\}
     $$
\end{proposition}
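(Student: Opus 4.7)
The plan is to prove the two directions separately; the explicit payment formula in the statement will fall out of the forward ``single-crossing $\Rightarrow$ monotone-implementability'' direction. Both directions use the taxation principle, which lets me replace implementability of $f_i$ by the existence of a price $P_i(a)$ for each alternative $a$ such that $f_i(v_i)\in\arg\max_a\bigl(v_i(a)-P_i(a)\bigr)$ for every $v_i\in V_i$.

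For the forward direction, fix a monotone $f_i$ and enumerate its image as $a_1\prec^a_i a_2\prec^a_i\cdots\prec^a_i a_r$ (the image is finite because $\mathcal A$ is). I would set $P_i(a_1)=0$, define $P_i(a_j)-P_i(a_{j-1})$ by the formula in the proposition, and let $P_i(a)=+\infty$ for alternatives outside the image (which are then trivially never preferred). To rule out deviations, I fix $v_i$ with $f_i(v_i)=a_j$ and an arbitrary image alternative $a_k$, rewrite the desired inequality as $v_i(a_j)-v_i(a_k)\geq P_i(a_j)-P_i(a_k)$ (for $k<j$) or $P_i(a_k)-P_i(a_j)\geq v_i(a_k)-v_i(a_j)$ (for $k>j$), telescope both sides across consecutive indices, and prove the resulting per-$\ell$ inequalities. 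For $k<j$ and index $\ell\in\{k+1,\ldots,j\}$, either $\ell=j$, in which case $v_i$ itself witnesses the infimum and the inequality is immediate, or $\ell<j$, in which case monotonicity forces $v_i\succeq^v_i v'$ for every $v'$ contributing to the infimum at $a_\ell$ (otherwise $v'\succ^v_i v_i$ would give $f_i(v')\succeq^a_i a_j\succ^a_i a_\ell$); single-crossing then yields $v_i(a_\ell)-v_i(a_{\ell-1})\geq v'(a_\ell)-v'(a_{\ell-1})$, so the infimum is dominated by $v_i(a_\ell)-v_i(a_{\ell-1})$. The case $k>j$ is symmetric: monotonicity now forces $v'\succeq^v_i v_i$ for every $v'$ with $f_i(v')=a_\ell$, $\ell>j$, and single-crossing flips the inequality in the required direction.

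For the reverse direction I would argue by contrapositive. Assume single-crossing fails: there exist $v'_i\succ^v_i v_i$ in $V_i$ and $a'\succ^a_i a$ in $\mathcal A$ with $v'_i(a')-v'_i(a)<v_i(a')-v_i(a)$. Define a two-valued function $f_i$ on $V_i$ by $f_i(w)=a'$ if $w\succeq^v_i v'_i$ and $f_i(w)=a$ otherwise; this is monotone with respect to $\succ^v_i$ and $\succ^a_i$, with image exactly $\{a,a'\}$. Any prices implementing $f_i$ would, by the no-deviation constraint at $v_i$, satisfy $P_i(a')-P_i(a)\geq v_i(a')-v_i(a)$, and by the constraint at $v'_i$, satisfy $P_i(a')-P_i(a)\leq v'_i(a')-v'_i(a)$; together these contradict the failure of single-crossing, so $f_i$ is not implementable and monotone-implementability must fail.

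The main obstacle is the bookkeeping in the forward direction; in particular, one must notice that the infimum in the price formula needs separate arguments for the diagonal case $\ell=j$ (where $v_i$ itself sits in the infimum set, so no appeal to single-crossing is needed) and for the off-diagonal cases (where monotonicity first pins down the relative $\succ^v_i$-position of $v_i$ and any $v'$ entering the infimum, after which single-crossing can be applied in the right direction). Once that is in place, everything else is a telescoping summation, and the payment formula in the proposition is literally the one just constructed.
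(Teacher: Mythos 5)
Your proposal is correct and follows essentially the same route as the paper: the same payment formula with the same per-consecutive-alternative inequality (monotonicity pins down the $\succ^v_i$-relation between $v_i$ and the valuations entering the infimum, then single-crossing gives the needed direction), and the same two-valued monotone counterexample for the converse. The only cosmetic difference is that you telescope the per-step inequalities explicitly, whereas the paper phrases the same content as unimodality of the profit sequence $\{v_i(a_j)-P_i(a_j)\}_j$ peaking at $f(v_i)$.
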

The set of functions that can be implemented in a monotone-implementability domain is slightly larger than the set of monotone functions. We will now describe this set of functions. 
Fix some player $i$ and let $\succ^v_i$ be some total order on the valuations in $V_i$ and let $\succ^a_i$ be some total order on the alternatives. A function $f:V_i\to \mathcal A$ is \emph{monotone up to tie-breaking} with respect to $\succ^v_i$ and $\succ^a_i$ if and for every 
{$v_i'\succ^v_i v_i$ and for every $a'\succ^a_i a$ such that $f(v)=a'$ and $f(v')=a$, it holds that $v'(a')-v'(a)=v(a')-v(a)$. 
}

\begin{proposition}\label{prop-tie-breaking-cool}
    Fix some player $i$ and let $V_i$ be a single-crossing domain with respect to some total order $\succ^v_i$ on $V_i$ and some total order $\succ^a_i$ on the alternatives. Let $f$ be an implementable function in $V_i$. Then, $f$ is monotone up to tie breaking with respect to $\succ_i^v$ and $\succ_i^a$. 
\end{proposition}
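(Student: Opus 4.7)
The plan is to assume $f$ is implementable by some payment $P_i$ (which by the taxation principle may be viewed as a price $P_i(a)$ on each alternative $a$, with $v_{-i}$ fixed and suppressed from the notation) and then extract two dominant-strategy inequalities that, combined with the single-crossing inequality, force the desired equality.

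First, I would fix $v' \succ^v_i v$ and $a' \succ^a_i a$ with $f(v) = a'$ and $f(v') = a$; note this is precisely the ``non-monotone'' configuration that we need to analyze. The goal becomes: show $v'(a') - v'(a) = v(a') - v(a)$.

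Next, I would write down the two incentive-compatibility inequalities coming from the fact that $v$ weakly prefers to report $v$ over $v'$, and $v'$ weakly prefers to report $v'$ over $v$:
\begin{align*}
v(a') - P_i(a') &\ge v(a) - P_i(a), \\
v'(a) - P_i(a) &\ge v'(a') - P_i(a').
\end{align*}
Rearranging these yields the sandwich
$$
v'(a') - v'(a) \;\le\; P_i(a') - P_i(a) \;\le\; v(a') - v(a).
$$
In particular, $v'(a') - v'(a) \le v(a') - v(a)$, which is the \emph{reverse} of what single-crossing delivers.

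Finally, I would invoke the single-crossing property of $V_i$ with respect to $\succ^v_i$ and $\succ^a_i$: since $v' \succ^v_i v$ and $a' \succ^a_i a$,
$$
v'(a') - v'(a) \;\ge\; v(a') - v(a).
$$
Combining this with the inequality obtained from IC forces equality throughout, hence $v'(a') - v'(a) = v(a') - v(a)$, which is exactly monotonicity up to tie-breaking. I do not foresee a serious obstacle: the only care needed is in setting up the taxation-principle viewpoint so that $P_i(a)$ and $P_i(a')$ are well-defined even though $v$ and $v'$ are distinct reports, but this is standard and follows from implementability applied to a single player with $v_{-i}$ fixed.
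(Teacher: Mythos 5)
Your proof is correct and follows essentially the same route as the paper's: the paper invokes the standard fact that implementability implies weak monotonicity (which yields $v'(a')-v'(a)\le v(a')-v(a)$ in the non-monotone configuration) and combines it with the single-crossing inequality, whereas you simply unpack that fact by adding the two incentive-compatibility inequalities explicitly. The taxation-principle framing is harmless but not even needed — summing the two IC inequalities with payments indexed by reports already cancels the payment terms.
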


\section{A Monotone FPTAS for  $k$-Minded Single-Crossing Domains} \label{sec-alg-k}
In this section we present a deterministic incentive-compatible FPTAS for multi-unit auctions with single-crossing $k$-minded bidders. Its running time depends polynomially on $k$.  
In Section \ref{sec-alg} we provide an incentive-compatible FPTAS for all multi-unit auctions where the bidders have single-crossing valuations that are not necessarily $k$-minded.
The mechanism of Section \ref{sec-alg} uses (as a black box) this section's mechanism. 
We start with a formal definition.
\begin{definition}\label{def:k-minded}
	A valuation $v$ is \emph{$k$-minded} if there exists a $k$-tuple $K=(s_{1},\ldots,s_{k})$ such that $v(s)=v(s-1)$ for every $s\notin K$. 
A domain of valuations $V_i$ is \emph{$k-$minded} if 
all valuations in the domain are $k$-minded with respect to the same set of quantities $K$.
\end{definition}

In this section and Section \ref{sec-alg} we assume that for each player $i$ we have access to extended value queries: given an index
$t$ and a quantity $s$, return the value $v_i^{t}(s)$, where $v_i^{t}$ is the $t$'th valuation in the single-crossing domain $V_i$ of player $i$ (we may think about the oracle that answers these queries as a succinctly represented circuit that describes the single-crossing domain). 

The input of the algorithm is $(t_1,\ldots, t_n)$, where $t_i$ denotes the index of the valuation $v_i$ of agent $i$ in the single-crossing domain $V_i$, and the number of items $m$.
We also assume that (rational) numbers in the queries are given in some standard way such that the number of bits that it takes to represent some number $x$ is not larger than the number of bits it takes to represent any number $x'>x$.

\subsection{The Mechanism} \label{subsec-alg}


We now describe the allocation rule. We will show that it gives $(1-\varepsilon)$-approximation to the maximum welfare, and that it can be implemented by an incentive-compatible mechanism in polynomial time. For the description of the allocation rule, we need to introduce two technical components. The first one is a particular way of rounding valuation functions by rounding the marginals. 

\begin{definition}\label{def-delta-marginal}
Let $v$ be a valuation function. The \emph{$\delta$-marginal-rounded valuation} of $v$ is:
\begin{align*}
	v^\delta(s)=\left\{
	\begin{array}{ll}
    0 , & s=0; \\
    \sum_{j=1}^{s}\lfloor v(j)-v(j-1) \rfloor_{\delta}, & s\neq 0.
	\end{array}
	\right.
\end{align*}
\end{definition}
where $\lfloor x \rfloor_\delta$ denotes the value of $x$ rounded down to the nearest multiple of $\delta$.  Note that we 
round the \emph{marginals} of the valuation to multiples of $\delta$, and not the values. 

In addition, the algorithm outputs an allocation of the items that maximizes the welfare in some instance. If there are several such welfare-maximizing allocations, we choose one of them according to the following order that we use as a tie-breaking rule: for every two different allocations $s'=(s_1',\ldots,s_n')$, $s=(s_1,\ldots,s_n)$ satisfy  that $s'\succ s$ if $\sum_{j=1}^{n}s_j'$ is smaller than $\sum_{j=1}^{n}s_j$.
If $\sum_{j=1}^{n}s_j=\sum_{j=1}^{n}s_j'$, then $s'\succ s$ if ${s}'_l>s_l$, where $l$ is the largest index where 
$s$ and $s'$ differ. The tie breaking rule chooses the largest allocation according to $\succ$ among all the welfare-maximizing allocations.

We are finally ready to describe the allocation rule. Fix $\varepsilon>0$. The allocation rule is as follows: 

\begin{enumerate}
\item Let
 $\delta=\max_{ p\in \mathbb{Z}} \big\{(4kn)^p| (4kn)^p \le \frac{\varepsilon \cdot v_{max}}{3n^2k^2}\big\}$ where $v_{max}=\max_{i}v_i(m)$.  

\item\label{alg-step-rounding}  For every player $i$, let $v_i^\delta$ to be the $\delta$-marginal-rounded valuation of $v_i$.

\item\label{alg-step-rewards} 
Let $TOP = \big\{i\in [n] \hspace{0.25em} \big| \hspace{0.25em} v_i(m) \ge \frac{3\delta n^2k^2}{\varepsilon} \big\}$. For each player $i\notin TOP$, let $u_i^\delta=v_i^\delta$.
For each player $i\in TOP$ which is $k_i$-minded with respect to $K_i=(k^i_1,...,k^i_{k_i})$, define $u_i^\delta$ by
$$u_i^\delta(s)=v_i(s)+|\{j|k^i_j\leq s\}|\cdot 2\delta kn $$ 
{where $k=\max_i |K_i|$.}

\item \label{alg-step-dynamic} Find a welfare-maximizing 
allocation for the instance $(u_1^\delta,\ldots, u_n^\delta)$ and allocate accordingly. If there are several optimal allocations, choose one using the tie breaking rule described above.
\end{enumerate}

 \begin{theorem}\label{thm-k-alg}
 Let $c$ be the number of bits used to represent numbers in an instance $(v_1,\ldots, v_n)$. The allocation rule above is implementable and outputs an allocation that is a $(1-\varepsilon)$ approximation to the maximal welfare. The allocation rule and its associated payments can be computed in time $poly(k,n,\frac{1}{\varepsilon},\log m, c,\log (\max_i t_i))$ with the same amount of extended value queries.
 \end{theorem}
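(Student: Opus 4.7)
To prove Theorem \ref{thm-k-alg}, I will establish the three conclusions---implementability, the $(1-\varepsilon)$-approximation, and the polynomial running time---separately. Implementability reduces, via Proposition \ref{prop-mono-iff-sc} together with the standard extension of its payment formula to functions that are monotone up to tie-breaking, to verifying that the allocation rule is monotone up to tie-breaking with respect to $\succ^v_i$ and the natural order on quantities. The specific tie-breaking rule---fewest total items allocated, then lexicographically largest at the highest index of disagreement---is chosen precisely so that welfare-maximization over a single-crossing family produces a single coherent monotone selection across the whole domain.

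For the \emph{approximation guarantee}, the key bounds are $v_i^\delta(s)\in[v_i(s)-k\delta,\,v_i(s)]$ (each of at most $k$ nonzero marginals is rounded down by less than $\delta$) for $i\notin TOP$, and $u_i^\delta(s)\in[v_i(s),\,v_i(s)+2\delta k^2 n]$ for $i\in TOP$. Let $A$ be the algorithm's output and $OPT$ a welfare-maximizing allocation. Welfare-maximality of $A$ under $u^\delta$ chains to $\sum_i v_i(A_i)\ge \sum_i u_i^\delta(A_i)-2\delta k^2 n^2\ge \sum_i u_i^\delta(OPT_i)-2\delta k^2 n^2\ge \sum_i v_i(OPT_i)-k\delta n-2\delta k^2 n^2$; substituting $\delta\le \varepsilon v_{max}/(3n^2 k^2)$ and $\sum_i v_i(OPT_i)\ge v_{max}$ then yields the $(1-\varepsilon)$ bound. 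For \emph{running time}, the bottleneck is the welfare-maximization step; since each $u_i^\delta$ is $k$-minded and its (possibly further rounded) values lie in a discrete set of size $\mathrm{poly}(n,k,1/\varepsilon)$, a dynamic program indexed by (player index, total rounded welfare) that stores the minimum number of items needed to achieve each welfare level solves the welfare-maximization in $\mathrm{poly}(n,k,1/\varepsilon)$ time. Computing $\delta$, the marginal-rounded valuations, and the payments via the infimum formula of Proposition \ref{prop-mono-iff-sc} add only polynomial overhead, with all numbers staying within $\mathrm{poly}(c,\log m,\log\max_i t_i)$ bits.

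The crux, and what I expect to be the main obstacle, is monotonicity up to tie-breaking. Fix $v_{-i}$ and let $v_i'\succ^v_i v_i$. Since $\delta$ is a step function of $v_{max}$ that only jumps when $v_{max}$ crosses a power of $4kn$, the argument naturally splits into \emph{within-regime} moves (both $\delta$ and $i$'s $TOP$-membership fixed) and \emph{across-regime} moves. Within a regime the family $\{u_i^\delta\}_{v_i\in V_i}$ is itself single-crossing---rounding marginals by $\lfloor\cdot\rfloor_\delta$ preserves the pointwise-marginal order characterizing single-crossing in multi-unit domains, and the reward term is independent of $v_i$---so welfare-max with the specified tie-breaking is monotone by the standard argument for single-crossing domains. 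The delicate case is across-regime: here $v_i'$ raises $v_{max}$ enough to multiply $\delta$ by at least $4kn$, and since $v_i'(m)$ must then equal the new $v_{max}$, player $i$ is automatically in $TOP$ under the new regime (even if it was not before). The quantitative accounting I plan to carry out is that the coarser rounding can drop any competitor $j$'s $u_j^\delta$ by at most $O(k\delta_{\mathrm{new}})$ (using $\delta_{\mathrm{old}}/\delta_{\mathrm{new}}=1/(4kn)$ to bound the TOP-to-non-TOP case), so across the $n-1$ competitors the total slack gained by an alternative allocation is $O(nk\delta_{\mathrm{new}})$, whereas adding even one additional step of $K_i$ to player $i$'s bundle already raises $u_i^{\delta_{\mathrm{new}}}$ by at least the marginal reward $2\delta_{\mathrm{new}} kn$. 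The factors $4kn$ in the base of $\delta$ and $2kn$ in the reward are chosen precisely so that this marginal reward dominates the total competitor slack; together with $v_i'\ge v_i$ pointwise from single-crossing with $v(0)=0$, this will imply that any bundle $s''<s^*$ that was strictly suboptimal for $i$ in the old regime remains strictly suboptimal in the new one, giving $f_i(v_i')\succeq^a_i f_i(v_i)$ up to tie-breaking and hence implementability.
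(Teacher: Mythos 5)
Your proposal is correct and follows essentially the same route as the paper's proof: the identical approximation chain via the bound $0\le v_i(s)-v_i^\delta(s)\le k\delta$ and the reward overhead $2\delta k^2 n$ per player, the same (player index, rounded welfare) dynamic program, and the same monotonicity case split into $\overline\delta=\delta$ (single-crossing of the $\delta$-marginal-rounded family plus the fixed tie-breaking rule) versus $\overline\delta>\delta$ (the per-step reward $2\delta kn$ dominating the $O(nk\overline\delta)$ aggregate slack competitors gain from coarser rounding, using $\overline\delta\ge 4kn\delta$). The one piece you leave as a plan---that welfare maximization with your tie-breaking rule is genuinely monotone within a regime---is exactly where the paper invests an explicit induction on the dynamic-programming table, so make sure to carry that step out rather than cite it as ``standard.''
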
 
A formal proof can be found in Subsection \ref{subsec-k-alg-proof}, but let us now give a brief informal proof sketch. Assume for simplicity that all values are integers. Our plan is to solve the problem using the standard dynamic program (see, e.g., \cite{nisanhandbook2015}) which computes the optimal solution in time $poly(n,\log m, W)$,
where $W$ is at least the value $OPT$ of the optimal solution. 
The problem is that $W$ might be large and in any case it is not clear how to obtain an estimation of $W$ in an incentive-compatible way so we cannot directly apply this dynamic program.

Assume for now that $W$ is known. Ignoring incentives, the standard algorithm rounds down each value to the nearest multiple of some $\delta$. Finding the optimal solution with the rounded down valuations now takes time $poly(n,\log m,\frac W \delta)$ and gives a solution with value at least $OPT-n\cdot \delta$, which is a good approximation when, e.g.,  $OPT\leq W\leq n\cdot OPT$ and $\delta=\frac {\varepsilon\cdot W} {n^2}$. This rounding technique works well for binary single-parameter domains, e.g., single-minded bidders, simply because running any monotone allocation rule on the rounded valuations preserves the monotonicity of the allocation rule. Indeed, it yields an incentive-compatible FPTAS for single-minded bidders when $W$ is known. 


When the valuations are $k$-minded, it is less obvious how to correctly round the valuations. Thus, we observe the following general rounding technique for single-crossing domains: select a subset of the valuations $R$ from the domain.``Round down'' each valuation $v$ to the largest valuation in $R$ that precedes it in the order that is specified by the single-crossing domain. The key observation is that running a monotone allocation rule (e.g., the dynamic program mentioned above) on the rounded down valuations preserves monotonicity. 
The set $R$ that we use is the set of $\delta$-marginal-rounded valuations\footnote{In fact, the set $R$ does not have to be a subset of the original single-crossing domain, e.g., in our case the domain might not contain some $\delta$-marginal-rounded valuations. However, our allocation rule is monotone with respect to the union of the original domain and $R$, so monotonicity is still preserved.} (Definition \ref{def-delta-marginal}). 
Note that every valuation is well approximated by its $\delta$-marginal-rounded down version, thus the optimal welfare in any instance is very close to the optimal welfare with respect to the instance when each valuation is ``rounded down'' to a valuation in $R$. 
Also, the optimal allocation in instances where each valuation is $\delta$-marginal rounded down can be computed in time $poly(n,\log m,\frac W \delta,k)$ by the dynamic program, since we have an upper bound of $W$ on the value of the optimal solution and all values in the set $R$ are multiples of $\delta$. Hence we get a good incentive-compatible approximation algorithm as long as $W$ is known. 

It is left to find a good estimate $W$ on $OPT$. When bidder $i$ changes his valuation from $v_i$ to $\overline v_i$ such that $\overline v_i\succ v_i$, the allocation of the bidder does not decrease if $W$ does not change. Thus, we only need to take care of the case where $W$ does change. Suppose that our estimation $W$ changes and thus the rounding parameter changes from $\delta$ to $\overline\delta$. Denote by $(o_1,\ldots, o_n)$ the optimal solution in the instance $(v_i,v_{-i})$. We have two competing forces that determine the number of items player $i$ gets in the instance $(\overline v_i,v_{-i})$. On one hand, when we compare the welfare
of allocations in the instance $(v_i,v_{-i})$ to their welfare
in $(\overline v_i,v_{-i})$, the increase in the value of every allocation in which player $i$ gets $t$ items is at least the increase in the value of every allocation where player $i$ gets less than $t$ items, because of the single crossing property. This makes player $i$ more likely to win more items in $(\overline v_i,v_{-i})$ than in $(\overline v_i,v_{-i})$. However, the change in the rounding parameter might cause 
the (rounded) values of the other players to go down for some allocations where player $i$ wins many items, and the (rounded) values in some allocations where player $i$ wins less items might remain unaffected.
The result might be that another allocation (which has values that are less affected by the rounding) is the optimal solution in the instance $(\overline v_i,v_{-i})$. If player $i$ gets less than $o_i$ items in this new optimal allocation, monotonicity is violated. The reward given to player $i$ in Step \ref{alg-step-rewards} ``compensates'' on the loss due to the new rounding, and guarantees that player $i$ gets at least $o_i$ items.

\subsection{Proof of Theorem \ref{thm-k-alg}}\label{subsec-k-alg-proof}

We now prove Theorem \ref{thm-k-alg}. We start by proving that the allocation rule can be efficiently implemented. We then analyze the approximation ratio of the algorithm and prove that the allocation rule is implementable. Finally, we observe that the payments can be computed in polynomial time.

\subsubsection{The Complexity of Implementing the Allocation Rule}

The only step in the algorithm that is non-trivial to implement is the last one, which outputs a welfare-maximizing allocation with respect to valuations $u_1^\delta,\ldots,u_n^\delta$. We use a standard dynamic program for that (see, e.g., \cite{nisanhandbook2015}) with our specific tie breaking rule.

Let $\valueupper=n\cdot \max_i u_i^\delta(m)$. We construct a table $T$ of size $n\times (\valueupper/\delta)$
such that $T(i,w)$ contains the minimum number of items
such that allocating them optimally between players $1,\ldots,i$ with valuations $u_1^\delta,\ldots,u_i^\delta$ yields welfare of at least $w$.  {As standard, we use pointers to track the allocation that corresponds to each entry in the table.} 
Formally, we use the following recursive formula:
 
 \begin{gather*}
\forall w\in  \{0,\delta,\ldots,\valueupper\},\quad 
     T(1,w)=\arg\min_{s\in [m]} \{u_1^\delta(s)\geq w\}
\\
{\forall w\in  \{0,\delta,\ldots,\valueupper\},\hspace{0.25em} \forall i\in \{2,\ldots,n\}, \hspace{0.25em}
     T(i,w)= \min_{0\le j \le \frac{w}{\delta}} 
\big\{\arg\min_{s\in [m]} \{u_i^\delta(s)\geq j\delta\}+T(i-1,w-j\delta)\big\}}
 \end{gather*}
After we construct the table, we output the allocation with the highest welfare among the allocations in column $n$ that allocate at most $m$ items. 

Correctness is almost immediate, by recalling that for valuations $u_1^\delta,\ldots,u_n^\delta$ the welfare of every allocation is a multiple of $\delta$,  and that $\valueupper$ is an obvious upper bound on the optimal welfare. Note that as described, the program only finds the value of the maximal welfare.


{To implement the tie breaking rule, in each cell $T(i,w)$ we choose the largest allocation according to the order $\succ$, among the allocations that we consider in the \textquote{min} expression. We also implement the pointers as follows. If we allocate $x_i$ items to player $i$ in the cell $T(i,w)$, then this cell points to the cell $T(i-1,w-v_i(x_i))$. A simple induction gives that the tie-breaking rule mentioned above is implemented correctly:}
\begin{lemma}
    For every value $w$ and for every player $i$, let $s$ be the allocation that the cell $T(i,w)$ is associated with. Then, every allocation $\hat{s}$ that allocates to the players $1,\ldots,i$ and has welfare at least $w$ satisfies that $s\succ \hat s$. 
\end{lemma}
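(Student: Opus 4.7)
The plan is induction on $i$. I will use the fact, asserted in the paper, that each $u_k^\delta$ takes values in multiples of $\delta$, so that for any allocation $\hat s$ the quantity $u_k^\delta(\hat s_k)/\delta$ is a nonnegative integer that indexes a term in the $\min$ expression defining $T(i,w)$. Write $s^{(i,w)}$ for the allocation associated with the cell $T(i,w)$. The base case $i=1$ is immediate: by the defining formula $s^{(1,w)}_1=\arg\min_{s\in[m]}\{u_1^\delta(s)\ge w\}$, any $\hat s_1$ with $u_1^\delta(\hat s_1)\ge w$ satisfies $\hat s_1\ge s^{(1,w)}_1$, so either the total number of items is strictly smaller in $s^{(1,w)}$ (which gives $s^{(1,w)}\succ\hat s$ by the fewer-items criterion), or $\hat s=s^{(1,w)}$.

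For the inductive step, fix $i\ge 2$ and any allocation $\hat s$ with $\sum_{k=1}^{i}u_k^\delta(\hat s_k)\ge w$. Define $j^\ast:=u_i^\delta(\hat s_i)/\delta$ and let $s^\ast$ denote the candidate allocation produced by the $j^\ast$-term in the $\min$ defining $T(i,w)$: its $i$-th coordinate is $x:=\arg\min_s\{u_i^\delta(s)\ge j^\ast\delta\}$, and its first $i-1$ coordinates are those of $s^{(i-1,w-j^\ast\delta)}$. Because $u_i^\delta(\hat s_i)=j^\ast\delta$, the prefix $(\hat s_1,\ldots,\hat s_{i-1})$ has welfare at least $w-j^\ast\delta$, so the inductive hypothesis applied to $T(i-1,w-j^\ast\delta)$ gives that either this prefix equals $s^{(i-1,w-j^\ast\delta)}$ or $s^{(i-1,w-j^\ast\delta)}$ is strictly $\succ$-larger. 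Moreover $x\le\hat s_i$ by the definition of $x$.

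Combining these two inequalities yields $\sum_{k}s^\ast_k\le\sum_{k}\hat s_k$. If this is strict, $s^\ast\succ\hat s$ by the fewer-items criterion of $\succ$. Otherwise equality forces $x=\hat s_i$ and $\sum_{k<i}s^\ast_k=\sum_{k<i}\hat s_k$, so the inductive hypothesis applied to the $(i-1)$-prefixes produces a largest differing index $l<i$ at which $s^\ast_l>\hat s_l$; since coordinate $i$ already agrees, $l$ is also the largest differing index between the full tuples $s^\ast$ and $\hat s$, so again $s^\ast\succ\hat s$. Finally, by the tie-breaking rule inside the $\min$ expression, $T(i,w)$ selects the $\succ$-largest candidate over all $j$, so $s^{(i,w)}\succeq s^\ast\succ\hat s$ (with equality only if $\hat s=s^{(i,w)}$), completing the induction. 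The main subtlety is the tight-total case: the equality $\sum_{k}s^\ast_k=\sum_{k}\hat s_k$ combined with the componentwise bounds $x\le\hat s_i$ and $\sum_{k<i}s^\ast_k\le\sum_{k<i}\hat s_k$ forces $x=\hat s_i$, which is precisely what lets the inductive lex comparison on the $(i-1)$-prefix lift to a lex comparison on the full $i$-tuple.
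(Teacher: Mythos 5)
Your proof is correct in substance and proves the same induction the paper does, but it is organized differently: the paper argues by contradiction, assumes the cell $T(i,w)$ holds a suboptimal allocation $\hat s$, locates the largest index $l$ at which $\hat s$ differs from the $\succ$-largest feasible allocation, and contradicts the induction hypothesis at the cell $T(l,\cdot)$ reached by following the pointers; you instead argue directly, peeling off only the last coordinate and exhibiting, for an arbitrary competitor $\hat s$, an explicit candidate $s^\ast$ in the $\min$ expression with $s^\ast\succeq\hat s$, then invoking the tie-breaking rule to conclude $T(i,w)\succeq s^\ast$. Your version has the advantage of not needing to separately invoke ``correctness of the dynamic program'' for the item-count comparison, and it handles the lexicographic lift through the tight-total case cleanly. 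The one genuine (though easily patched) gap is the range of your index $j^\ast=u_i^\delta(\hat s_i)/\delta$: the recurrence only ranges over $0\le j\le w/\delta$, so if $u_i^\delta(\hat s_i)>w$ there is no $j^\ast$-term and the cell $T(i-1,w-j^\ast\delta)$ does not exist. You should set $j^\ast=\min\{u_i^\delta(\hat s_i)/\delta,\;w/\delta\}$; the two facts you actually use, namely $x=\arg\min_s\{u_i^\delta(s)\ge j^\ast\delta\}\le\hat s_i$ and that the $(i-1)$-prefix of $\hat s$ has welfare at least $w-j^\ast\delta$, survive this capping, so the rest of your argument goes through unchanged.
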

\begin{proof}
{We prove the lemma by induction}.
The claim trivially holds for all cells in the first column (for $T(1,\cdot)$).
The strong induction hypothesis says that every cell $T(j,w')$ where $j\le i-1$
is associated with the largest allocation according to the order $\succ$ that allocates to the player $1,\ldots,i-1$ and has welfare at least $w'$.

Fix a cell $T(i,w)$ in column $i$, and denote with $s=(s_1,\ldots,s_i,\emptyset,\ldots,\emptyset)$ the largest allocation according to the order $\succ$ that has welfare at least $w$. 
Assume {in contradiction} that $T(i,w)$ is associated with an allocation $\hat s=(\hat s_1,\ldots,\hat s_i)$ such that $s\succ \hat s$. Therefore, $\sum_{j=1}^i s_j=\sum_{j=1}^i \hat s_j$ ($\sum_{j=1}^i s_j\le \sum_{j=1}^i \hat s_j$ 
because $s \succ \hat{s}$, and the correctness of the dynamic program guarantees that $\sum_{j=1}^i s_j\ge \sum_{j=1}^i \hat s_j$). Since $s\succ \hat{s}$, $s_l>\hat{s}_l$ where {$l\geq 2$} is the largest index where $s$ and $\hat s$ differ. 


We obtain a contradiction by analyzing the cell $T(l,w-\sum_{j=1}^{l+1} u_j^\delta(\hat s_j))$ which is associated by construction with $(\hat s_1,\ldots,\hat s_l,\emptyset,\ldots,\emptyset)$. 
We remind that by definition for every $j> l$, $s_j=\hat{s}_j$. Therefore, $\sum_{j=l+1}^{n}u_j^\delta(s_j)=\sum_{j=l+1}^{n}u_j^\delta(\hat s_j)$, so both $(s_1,\ldots,s_{l},\emptyset,\ldots,\emptyset)$ and $(\hat s_1,\ldots,\hat s_{l},\emptyset,\ldots,\emptyset)$ have welfare of at least  $w-\sum_{j=1}^{l+1}u_j^{\delta}(\hat s_j)$. In addition, it implies that
$\sum_{j=l+1}^{n}s_j=\sum_{j=l+1}^{n}\hat s_j$ so combining this with the equality above implies that  $\sum_{j=1}^{l}s_j=\sum_{j=1}^{l}\hat s_j$. By assumption, $s_l>\hat{s}_l$, so $(s_1,\ldots,s_l,\emptyset,\ldots,\emptyset)\succ (\hat s_1,\ldots,\hat s_l,\emptyset,\ldots,\emptyset)$ even though the cell $T(l,w-\sum_{j=1}^{l+1} u_j^\delta(\hat s_j))$ is associated with the allocation $(\hat s_1,\ldots,\hat s_l,\emptyset,\ldots,\emptyset)$, so we get a contradiction to the induction hypothesis, as needed.      
\end{proof}

Now for the analysis of the running time. The running time of the first two steps of the allocation rule is $poly(k,n,c,\frac{1}{\varepsilon})$, 
because writing a value requires at most $c$ bits. 
Note that our choice of $\delta$ 
guarantees that $4nk\delta \ge \frac{\varepsilon\cdot v_{max}}{3n^2k^2}$, so $\delta \ge \frac{\varepsilon\cdot v_{max}}{12n^3k^3}$. Therefore, the number of rows in the table is at most:  
$$
\frac{\valueupper}{\delta}\le  \frac{n\cdot[v_{max}+2k^2\delta n]}{\delta}=poly(k,n,\frac{1}{\varepsilon})
$$
The number of columns is $n$, so the total number of cells is $poly(k,n,\frac{1}{\varepsilon})$. The value of each cell can be computed in time $poly(k,n,\frac{1}{\varepsilon},\log m,c)$ (we use binary search to find the minimal number of items that gives value at least $w$).
Therefore, the total running time is 
$poly(k,n,\frac{1}{\varepsilon},\log m,c)$.

\subsubsection{The Approximation Ratio}

Let
 $ \s=(\s_1,\ldots,\s_n)$ be the allocation that the algorithm specified above outputs for $(v_1,\ldots,v_n)$, and let 
$o=(o_1,\ldots,o_n)$ be an optimal allocation. 
Denote $\sum_i v_i(\s_i)$ with $ALG$ and 
$\sum_i v_i(o_i)$ with $OPT$. When proving that $ALG\geq (1-\varepsilon)\cdot OPT$ we observe that for every quantity $s$: 
\begin{equation}
{0\le}v_i(s)-v_i^{\delta}(s) \le k\delta  \label{eq-rounding}
\end{equation}
This is because zero marginals are not affected by the rounding, and each of the $k$ marginals of $v$ that is affected decreases by at most $\delta$.  

\begin{align*}
ALG&=\sum_i v_i(\s_i)\\
&\ge \sum_i v_i^\delta(\s_i) \\
&\ge  \sum_i u_i^\delta(\s_i)  - 2n^2k^2\delta &\text{(by Step \ref{alg-step-rewards})}
\\
&\ge \sum_i u_i^\delta(o_i) - 2n^2k^2\delta &\text{($\s$ is optimal for $u_1,\ldots,u_n$)} \\  
&\ge \sum_i v_i^\delta(o_i) - 2n^2k^2\delta &\text{(by Step \ref{alg-step-rewards})}
\\
&\ge \sum_i v_i(o_i) - nk\delta  -2n^2k^2\delta &\text{(by Equation (\ref{eq-rounding}))} \\
&\ge OPT  -3n^2k^2 \cdot \frac{\varepsilon\cdot v_{max}}{3n^2k^2} &\text{(by the definition of $\delta$)} \\
&\ge (1-\varepsilon)\cdot OPT &(v_{max}\le OPT)
\end{align*}

\subsubsection{Monotonicity of the Allocation Rule}\label{subsec-monotone}

Fix a player $i$ and a valuation profile $v_{-i}$ of all other players. Let $\overline{v}_i\succ {v}_i\in V_{i}$. 
Let ${\s}=({\s}_1,\ldots,{\s}_n)$ be the allocation that the algorithm outputs for $({v}_i,v_{-i})$. Our goal is to show that player $i$ wins at least $s_i$ items in the instance $(\overline v_i,v_{-i})$. 

We denote with $\delta$ and with $\overline \delta$ the rounding parameters when the input of the algorithm is $(v_i,v_{-i})$ and $(\overline{v}_i,v_{-i})$, respectively. 
Let 
$(v_i^\delta,v_{-i}^\delta)$ 
be the valuations defined in Step \ref{alg-step-rounding} of the algorithm when the input is $(v_i,v_{-i})$, and let
$(\overline v_i^{\overline \delta},v_{-i}^{\overline \delta})$
be the valuations
when the input is $(\overline{v}_i,v_{-i})$. Similarly, let $(u_i^\delta,u_{-i}^\delta)$
and 
$(\overline u_i^{\overline \delta},u_{-i}^{\overline \delta})$
be the valuations defined in Step \ref{alg-step-rewards} for the two inputs. We will also use the notation $\overline v^\delta_i$ to denote the $\delta$-marginal-rounded valuation of $\overline v_i$.
We divide the analysis into two cases:

\paragraph{Case I: $\overline \delta> \delta$.}
Fix an allocation $\hat{s}=(\hat s_1,\ldots,\hat{s}_n)$ with $\hat{s}_i<s_i$. We will show that 
$\overline u_i^{\overline \delta}(s_i)+\Sigma_{j\neq i}u_j^{\overline\delta }(s_j)>\overline u_i^{\overline \delta}(\hat s_i)+\Sigma_{j\neq i}u_j^{\overline\delta }(\hat s_j)$, which implies that the algorithm outputs an allocation where player $i$ wins at least $s_i$ items. 


We start by observing that for any $\delta$, the domain of $\delta$-marginal-rounded valuations of $V_i$ is also single crossing and preserves the order of $V_i$. Therefore:
\begin{gather} \label{eq-preserve-sid} 
\overline v_i^{\delta} \succ v_i^{\delta} \implies  
\overline v_i^{\delta}(\s_i)-\overline{v}_i^{\delta}(\hat s_i)\ge  v_i^\delta(\s_i)-v_i^\delta(\hat\s_i) 
\end{gather}
 Also,  the allocation $\s$ maximizes the welfare in the instance $(u_i^\delta,u_{-i}^\delta)$. Hence: 
\begin{equation} \label{eq-a1-wins}
    \begin{multlined} 
        u^{\delta}_i(\s_i)+\sum_{j\neq i}u^{\delta}_j(\s_j) \ge u^{\delta}_i(\hat{\s}_i)+\sum_{j\neq i}u^{\delta}_j(\hat{\s}_j) \\
    \implies v_i^{\delta}(\s_i)+\sum_{j\neq i}v_j^{\delta}(\s_j) +2n^2k^2\delta 
    \ge v_i^{\delta}(\hat{\s}_i)+\sum_{j\neq i}v_j^{\delta}(\hat{\s}_j)
    \end{multlined}
\end{equation}
where the latter inequality holds due to Step \ref{alg-step-rewards} of the algorithm.
In addition, note that by (\ref{eq-rounding}), we have that for every player $j$:
\begin{equation}\label{eq-presrve-sid-rounded}
    v_j^\delta(s)-v_j^{\overline\delta}(s)\le v_j(s)-v_j^{\overline \delta}(s) \le k\overline \delta 
\end{equation}
Therefore:
\begin{align}
\overline{v}_i^{\overline \delta}(\hat{\s}_i) + \sum_{j\neq i}v_j^{\overline \delta}(\hat{\s}_j)&\le  
\overline{v}_i^{\delta}(\hat{\s}_i) + \sum_{j\neq i}v_j^{\delta}(\hat{\s}_j)
&\text{($\overline \delta$ is a multiple of $\delta$)} \nonumber\\
 &=   \overline{v}_i^{\delta}(\hat{\s}_i) -v_i^{\delta}(\hat{\s}_i)+v_i^{\delta}(\hat{\s}_i)+ \sum_{j\neq i}v_j^{\delta}(\hat{\s}_j) \nonumber\\
&\le \overline{v}_i^{\delta}({\s}_i) -v_i^{\delta}({\s}_i)+v_i^{\delta}(\hat{\s}_i)+ \sum_{j\neq i}v_j^{\delta}(\hat{\s}_j) &\text{(by (\ref{eq-preserve-sid}), {as $\hat{s}_i<s_i$)}} 
\nonumber \\
 &\le \overline{v}_i^{\delta}({\s}_i) -v_i^{\delta}({\s}_i)+v_i^{\delta}(\s_i)+\sum_{j\neq i}v_j^{\delta}(\s_j) +2n^2k^2\delta  &\text{(by (\ref{eq-a1-wins}))} \nonumber \\
 &= \overline{v}_i^{\delta}({\s}_i)+\sum_{j\neq i}v_j^{\delta}(\s_j) +2n^2k^2 \delta  \nonumber\\
 &\le \overline{v}_i^{\overline\delta}({\s}_i)+\sum_{j\neq i}v_j^{{\overline\delta}}(\s_j)+ nk\overline \delta +2n^2k^2 \delta  \nonumber &\text{(by (\ref{eq-presrve-sid-rounded}))} \\
  &< \overline{v}_i^{{\overline\delta}}({\s}_i)+\sum_{j\neq i}v_j^{\overline\delta}(\s_j)+ 2nk\overline\delta &(2nk\delta<\overline \delta)  \nonumber\\
  \implies \overline{v}_i^{\overline\delta}(\hat{\s}_i) + &\sum_{j\neq i}v_j^{\overline\delta}(\hat{\s}_j) < \overline{v}_i^{\overline\delta}({\s}_i)+\sum_{j\neq i}v_j^{\overline\delta}(\s_j)+ 2nk\overline  \delta  \label{eq-contradict} \nonumber
\end{align}
Since the rounding factor strictly increases when player $i$'s valuation is $\overline v_i$, in the instance $(\overline v_i,v_{-i})$ we have that the only player in the set $TOP$ is player $i$, and thus $u_j^{\overline \delta}(\hat{\s}_j)=v_j^{\overline\delta}(\hat{\s}_j)$ for every ${j\neq i}$.
Denote with $l,\hat l$ respectively the indices of $s_i,\hat s_i$ in the $k$-tuple $ K_i$.
By adding
 $2nk\hat l\cdot \overline \delta$ to both sides of the inequality above, and as $\overline u_i^{\overline \delta}(\hat{\s}_j)=\overline v_i^{\overline\delta}(\hat{\s}_j)  +\hat{l}\cdot 2\delta kn$, we get: 
\begin{align*}
\overline u_i^{\overline \delta}(\hat{\s}_i)+\sum_{j\neq i}u_j^{\overline \delta}(\hat{\s}_j) &< \overline{v}_i^{\overline \delta}({\s}_i)+\sum_{j\neq i}u_j^{\overline \delta}(\s_j)+ 2nk(\hat l+1)\overline \delta  \\
&\le 
\overline u_i^{\overline \delta}({\s}_i)+\sum_{j\neq i} {u_j^{\overline \delta}}({\s}_j)
&(\text{$s_i>\hat{s}_i$, so $l\ge \hat l +1$})
\end{align*}
Thus, the algorithm does not
output the allocation $\hat s$
given $(\bar v_i,v_{-i})$. Therefore, it necessarily outputs an allocation where player $i$ wins at least $s_i$ items, as needed.

\paragraph{Case II: $\overline \delta=\delta$.} Observe that if the valuation of player $i$ changes in a way that does not affect the rounding parameter $\delta$, then all the marginals in his $\delta$-marginal-rounded valuation either increase or do not change. We will show that in this case -- also because the tie breaking rule is fixed -- the number of items that the player gets 
does not decrease, so the allocation rule is monotone. 


Let $\overline s$ be the optimal allocation  that the algorithm outputs given $(\overline u_i^{\overline \delta},u_{-i}^{\overline \delta})$ (if there are several optimal allocations, then $\overline s$  is chosen according to the tie-breaking rule).
We will show that $\overline s_i\ge s_i$. Observe that $\overline u_i^{\overline \delta}( s_i)- \overline u_i^{\overline \delta}( \overline s_i) \ge u_i^{\delta}(s_i)-  u_i^{\delta}(\overline s_i)$, since the set of all $\delta$-marginal rounded valuations of $V_i$ is single crossing, $\overline \delta =\delta$ by assumption and  the rewards of Step \ref{alg-step-rewards} do not change the marginal values.


We also observe that since  the rounding parameter is identical in the instances $(v_i,v_{-i})$ and $(\overline v_i,v_{-i})$, then the sets of players that are in $TOP$ in both instances are identical, perhaps except player $i$. Therefore,  $u_j^{\delta}=u_j^{\overline \delta}$ for every
$j{\neq} i$. In addition, the allocation $\overline s$ is optimal for the instance $(\overline u_i^{\overline \delta},u_{-i}^{\overline \delta})$.
Together, assuming towards a contradiction that $s_i > \overline s_i$:
\begin{equation}\label{eq-s-wins}
      u_i^{\delta}(\overline s_i)-u_i^{\delta}(s_i)\ge \overline u_i^{\overline \delta}(\overline s_i)-\overline u_i^{\overline \delta}(s_i)  
    \ge \sum_{j\neq i}{u}_j^{\overline  \delta}( s_j) - \sum_{j\neq i}{ u}_j^{\overline  \delta}( \overline s_j) = \sum_{j\neq i}{u}_j^{\delta}( s_j) - \sum_{j\neq i}u_j^{ \delta}( \overline s_j) 
\end{equation}
I.e., in the instance $(u_i^\delta,u_{-i}^\delta)$, the welfare of the allocation $\overline s$ is at least the welfare of the allocation $s$. Since by definition $s$ maximizes the welfare in the instance $(u_i^\delta,u_{-i}^\delta)$,the allocations $s,\overline s$ have the same welfare in this instance. Therefore, by our tie breaking rule, $s\succ \overline s$, so the fact that the dynamic program outputs the allocation $\overline s$ given $(\overline u_i^{\overline \delta},u_{-i}^{\overline \delta})$ implies that:
\begin{equation*}\label{eq-s-good-modified}
    \overline u_i^{\overline \delta}(s_i)+\sum_{j\neq i}u_j^{\overline \delta}(s_j) < 
\overline u_i^{\overline \delta}(\overline s_i)+\sum_{j\neq i}u_j^{\overline \delta}(\overline s_j)
\end{equation*}
Therefore:
\begin{equation*}
    u_i^\delta(s_i)-u_i^\delta(\overline s_i) \le \overline u_i^{\overline \delta}( s_i)- \overline u_i^{\overline \delta}( \overline s_i) 
    < \sum_{j\neq i}u_j^{\overline \delta}(\overline s_j)-\sum_{j\neq i}u_j^{\overline \delta}(s_j)= \sum_{j\neq i}u_j^{ \delta}(\overline s_j)-\sum_{j\neq i}u_j^{\delta}(s_j)
\end{equation*}
which is a contradiction since $s$ maximizes the welfare in $(u_i^\delta,u_{-i}^\delta)$. Hence, $\overline s_i \ge s_i$, as needed. 


\subsubsection{Payment Computation}\label{subsec-pay-k} 
Assume that player $i$ is $k$-minded with respect to the tuple of quantities $K_i$. Note that we can always assume that the number of items a bidder is assigned is in $K_i\cup \{0\}$, since by ``rounding down'' the input to the nearest element in $K_i\cup \{0\}$ the value of the bidder does not change.

Thus, to compute the payment that is specified in Proposition \ref{prop-mono-iff-sc}, 
for every quantity $s_j\in K_i$
that is smaller than the number of items that player $i$ wins given the output of $f(v_i^{[t_i]},v_{-i})$,  
we find the valuation 
 $v_i^{j}$ such that  $f_i(v_i^{j},v_{-i})=s_j$ and the marginal $v_i^{j}(s_j)-v_i^{j}(s_{j-1})$ is minimal, by executing a binary search over the subset of valuations  $\{v_i^{[0]},\ldots,v_i^{[t_i]}\}\subseteq V_i$. 
 
 Observe that  payment computation depends only on valuations that precede $v_i$ in the single-crossing order. Thus, by assumption, the number of bits used to represent numbers in every query is at most $c$.
We have that the payment of the players can be computed with $poly(k,n,{c},\log (\max_i t_i))$ extended value queries in time $poly(k,n,\frac{1}{\varepsilon},\log m,c,\allowbreak\log (\max_i t_i))$.

\section{A Monotone FPTAS for all Multi-Unit 
Single-Crossing Domains}\label{sec-alg}
In this section we show that there is an incentive-compatible FPTAS for maximizing the welfare in every single-crossing multi-unit domain.
In particular, this improves over the FPTAS of  \cite{briest} that applies only to multi-unit auctions with single-minded bidders. We obtain the FPTAS by reducing any multi-unit single-crossing domain to $k$-minded single-crossing domain, for a small enough $k$. We then use the monotone FPTAS of Section \ref{sec-alg-k} as a black box to derive our result.\footnote{For unknown single-minded bidders, the reduction that we perform in this section does not preserve the monotonicity of the allocation rule. 
Fortunately, as detailed in Appendix \ref{sec-warmup}, the reduction is superfluous in this case.
}

As in Section \ref{sec-alg-k}, for each player $i$ we have access to extended value queries. The input is $(t_1,\ldots, t_n)$, where  $t_i$ denotes the index of valuation $v_i$ of agent $i$ in the single-crossing domain $V_i$.


    
	\begin{theorem} \label{thm-alg} 
	For every $\varepsilon>0$ there exists a deterministic incentive-compatible mechanism that provides an $(1-\varepsilon)$-approximation to the welfare in multi-unit auctions where the domain $V_i$ of each player $i$ is finite and single crossing. Both the allocation and the payments are computed in time  $poly(n,\frac{1}{\varepsilon},\log m,b)$
with $poly(n,\frac{1}{\varepsilon},\log m,b)$ extended value queries, 
where  $b$ is the maximal number of bits used to represent numbers in the domains $V_1,\ldots, V_n$.
	\end{theorem}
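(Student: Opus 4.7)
The plan is to reduce any single-crossing multi-unit instance to a single-crossing $k$-minded instance with $k = poly(n, 1/\varepsilon, \log m, b)$ and then invoke Theorem~\ref{thm-k-alg} as a black box. For each bidder $i$, I would precompute a set of breakpoint quantities $K_i = \{s^i_1 < s^i_2 < \cdots < s^i_{k_i}\} \subseteq [m]$ that depends only on the domain $V_i$ (not on the reported valuation $v_i$), and coarsen each reported $v_i$ into a $k_i$-minded valuation $\tilde v_i$ by $\tilde v_i(s) := v_i(s^i_t)$, where $s^i_t := \max\{s^i_j \in K_i : s^i_j \le s\}$. Because $K_i$ is a $V_i$-only object, for any $v' \succ^v_i v$ and any consecutive pair $s^i_t < s^i_{t+1}$, the inequality $\tilde v'(s^i_{t+1}) - \tilde v'(s^i_t) \ge \tilde v(s^i_{t+1}) - \tilde v(s^i_t)$ is just the original single-crossing inequality for $V_i$ evaluated at the pair of breakpoints. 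Thus $\{\tilde v : v \in V_i\}$ forms a single-crossing $k_i$-minded domain, so Theorem~\ref{thm-k-alg} yields a monotone $(1-\varepsilon/2)$-approximation on the coarsened instance; monotonicity and the payment formula then lift back to $V_i$ because the coarsening is pre-processing that uses only the domain structure.

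To construct $K_i$ I would use the topmost valuation $v_i^{\top} \in V_i$ (last in $\succ^v_i$) as a template, and place breakpoints at the smallest quantities at which $v_i^{\top}$ first crosses each level of the geometric sequence $\tau_i, (1+\eta)\tau_i, (1+\eta)^2 \tau_i, \ldots, v_i^{\top}(m)$, for $\eta = \Theta(\varepsilon/n)$ and an appropriately chosen cutoff $\tau_i$. Each breakpoint is located with $O(\log m)$ extended value queries by binary search, and since values are bounded by $2^b$, the sequence has length $O(nb/\varepsilon)$, giving $|K_i| = poly(n, 1/\varepsilon, b)$ and total query complexity $poly(n, 1/\varepsilon, \log m, b)$. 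A short calculation then gives $v_i^{\top}(s) - v_i^{\top}(s^i_t) \le \eta \cdot v_i^{\top}(s^i_t)$ for every $s \in [s^i_t, s^i_{t+1})$, and applying the single-crossing inequality between $v_i$ and $v_i^{\top}$ at the alternatives $(s^i_t, s)$ lifts this to every $v_i \in V_i$: $v_i(s) - \tilde v_i(s) = v_i(s) - v_i(s^i_t) \le v_i^{\top}(s) - v_i^{\top}(s^i_t) \le \eta \cdot v_i^{\top}(s^i_t)$.

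The main obstacle I anticipate will be calibrating the cutoff $\tau_i$. Summing the pointwise bound over all bidders, the multiplicative loss on the optimal allocation is at most $\eta \sum_i v_i^{\top}(s^i_{t_i})$, while bundles below the cutoff contribute an additive loss of at most $n \cdot \max_i \tau_i$; both must be kept to $(\varepsilon/2)\cdot \mathrm{OPT}$. The subtlety is that $\mathrm{OPT}$ in the reported instance can be much smaller than the welfare one would see under the top valuations, so a naive choice such as $\tau_i = \varepsilon \cdot v_i^{\top}(m) / n$ is too crude. The resolution, flagged in the introduction as ``a more nuanced variant of $k$-sketches\ldots guaranteed only for bundles that have large enough values,'' is to only require accurate sketching above a domain-derived threshold, and to argue separately --- via the monotonicity-up-to-tie-breaking characterization of Proposition~\ref{prop-tie-breaking-cool} --- that mapping sub-threshold bundles to their immediate below-threshold breakpoint neither breaks monotonicity nor costs more than an $\varepsilon$-fraction of the welfare, since the only bundles affected have value that is negligible compared to the largest single-bidder value in the instance and hence to $\mathrm{OPT}$. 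Once this calibration is in place, the running-time and query bounds follow from Theorem~\ref{thm-k-alg} applied to the coarsened $k$-minded instance, plus the $poly(n, \log m, 1/\varepsilon, b)$ preprocessing overhead for building the sketches.
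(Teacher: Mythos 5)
Your high-level architecture matches the paper's: build a domain-only breakpoint set $K_i$, project each reported valuation onto it, observe that the projected domain is still single-crossing with the same order, and feed the resulting $k$-minded instance to Theorem~\ref{thm-k-alg}. The projection, the single-crossing-preservation argument, and the incentive-compatibility lifting are all correct and essentially identical to the paper's Lemma~\ref{lemma-k-optimal}. The gap is in how you build $K_i$. Using only the topmost valuation $v_i^{\top}$ as a template gives, via single-crossing, the bound $v_i(s)-v_i(s^i_t)\le v_i^{\top}(s)-v_i^{\top}(s^i_t)\le \eta\, v_i^{\top}(s^i_t)$ --- an error measured on the scale of $v_i^{\top}$, not of the reported $v_i$. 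When $v_i$ sits far below $v_i^{\top}$ in the domain this is fatal, and not only for small bundles: take $v^{\top}(s)=M+s$ with $M\gg m/\eta$, so that $v^{\top}$ crosses no geometric level between $s=1$ and $s=m$ and $K=\{0,1\}$; the domain may also contain $v(s)=s$ (single-crossing with $v^{\top}$ holds with equality), for which the projection collapses to $\tilde v\equiv 1$ and loses a $1-1/m$ fraction of $v(m)$. Your proposed fix --- a cutoff $\tau_i$ and separate treatment of sub-threshold \emph{bundles} --- does not touch this, because the failure is caused by low-value \emph{valuations}, not low-value bundles.

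The missing idea is a ladder of templates rather than one. The paper selects valuations $u_1\prec u_2\prec\cdots$ from $V_i$ so that consecutive templates differ by a $(1+\varepsilon/2)$ factor in their value for the grand bundle, and lays a geometric quantity grid for each; the total size is still $poly(1/\varepsilon,b)$ since $|U|\le\log_{1+\varepsilon/2}2^b$. For a reported $v$, one compares against the largest template $u\preceq v$ and applies single-crossing at the pair $(s,m)$ in the \emph{downward} direction: $v(m)-v(s)\ge u(m)-u(s)$ rearranges to $v(s)-u(s)\le v(m)-u(m)\le \tfrac{\varepsilon}{2}u(m)$, and combined with $u(s)-u^K(s)\le\tfrac{\varepsilon}{2}u(m)$ and $v^K(s)\ge u^K(s)$ this yields $v(s)-v^K(s)\le\varepsilon\, v(m)$ --- an additive error on the reported valuation's own scale, which is exactly what the welfare accounting $\sum_i \tfrac{\varepsilon}{2n}v_i(m)\le\tfrac{\varepsilon}{2}\,\mathrm{OPT}$ requires. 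Incidentally, this also removes the need for your separate monotonicity argument for sub-threshold bundles: the guarantee is additive in $v(m)$ rather than multiplicative in $v(s)$, so no per-bundle threshold is needed at all.
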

Note that the input size is $\sum_{i=1}^n\log (|V_i|)+\log m$. Thus, as long as $b$ is polynomial in the input size we get an FPTAS that runs in time that is polynomial in the input size and $\frac 1 \varepsilon$ (as well as the unavoidable terms, $n$ and $\log m$). In particular, in the ``standard'' case in which each possible value of a bundle is in the set of integers $\{1,2,3, \ldots, poly(m)\}$, we get an algorithm that runs in time $poly(n,\frac{1}{\varepsilon},\log m)$.

The main lemma of this section ``reduces'' a single-crossing domain to a single-crossing $k$-minded domain without losing too much in the approximation ratio. We start with a definition.

\begin{definition}
Let $K\subseteq [m]$ be a subset of quantities. For a valuation $v$,  let $v^{K}$ the following ($|K|$-minded) valuation: 
    $v^K(s)= \max \{v(x)\hspace{0.25em}|\hspace{0.25em} x\le s, \hspace{0.25em} x\in  K\}$.
\end{definition}

\begin{lemma}\label{lemma-k-optimal}
Fix $\varepsilon>0$ and let $V$ be a single-crossing multi-unit auction domain. Let $b$ be the maximal number of bits needed to represent a value in $V$. 
	Then, there exists a set $ K\subseteq [m]$ of size $poly(\frac{1}{\varepsilon},b)$ such that:
 \begin{enumerate}
     \item For every $v\in V$ and for every $s\in [m]$, $|v(s)-v^{K}(s)|< \varepsilon \cdot v(m)$.
     \item Let $V^{ K}=\{v^K \hspace{0.25em}|\hspace{0.25em} v\in V  \}$. Then, the domain $V^K$ is single crossing and satisfies that for every  $\overline{v},v\in V$:
     \begin{equation*}\label{eq-single-preserve}
    \overline{v} \succ v \implies \overline{v}^K \succ v^K     
     \end{equation*}
 \end{enumerate}
Moreover, $ K$ can be found in time $poly(\frac{1}{\varepsilon},\log m,b)$ by making $poly(\frac{1}{\varepsilon},\log m,b)$ extended value queries.
\end{lemma}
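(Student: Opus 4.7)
My plan is to construct $K$ as a union of per-magnitude-bucket value-level thresholds of a single representative per bucket, and then verify both properties via the marginal-dominance consequence of single-crossing. A direct specialization of single-crossing to adjacent alternatives $s, s+1$ gives that, for $\overline v \succ v$ in $V$, the marginal $\overline v(s+1) - \overline v(s)$ pointwise dominates $v(s+1) - v(s)$. Combined with normalization this also yields pointwise value-dominance $\overline v \geq v$, so the map $t \mapsto v^{(t)}(m)$ is monotone in the valuation index $t$, and the top-indexed valuation $v^{top}$ of $V$ dominates every $v \in V$ both pointwise and marginally.

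For the construction, I would first compute $V^* = v^{top}(m)$ with a single query at the top index and partition the nonzero valuations of $V$ into dyadic magnitude buckets $B_j = \{v \in V : v(m) \in (V^*/2^{j+1}, V^*/2^j]\}$ for $j = 0, 1, \ldots, O(b)$ (which suffices because all values are representable in $b$ bits, so the ratio of any two positive $v(m)$'s is bounded by $2^{O(b)}$). For each non-empty bucket I would binary-search on the valuation index---exploiting the monotonicity of $v^{(t)}(m)$---to find the maximal-indexed representative $v^{top}_j \in B_j$, then for $k = 0, 1, \ldots, \lceil 2/\varepsilon \rceil$ binary-search on $s$ to place a threshold $x^j_k = \min\{s : v^{top}_j(s) \geq k \varepsilon V^*/2^{j+1}\}$. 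Taking $K = \{0, m\} \cup \bigcup_{j,k}\{x^j_k\}$ yields $|K| = O(b/\varepsilon)$ and respects the claimed query budget.

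To verify the first claim, fix $v \in V$ in bucket $j$, any $s$, and let $\pi(s) = \max\{x \in K : x \leq s\}$; monotonicity of $v$ gives $v^K(s) = v(\pi(s))$. Marginal dominance of $v^{top}_j$ over $v$ yields $v(s) - v(\pi(s)) \leq v^{top}_j(s) - v^{top}_j(\pi(s))$, which by the construction of consecutive $j$-bucket thresholds is strictly smaller than $\varepsilon V^*/2^{j+1} \leq \varepsilon v(m)$, with minor edge-case verification when $s$ lies below $x^j_0$ or beyond the last $j$-bucket threshold (handled identically via $v^{top}_j$'s value bounds at those points). The second claim reduces to the observation that $\pi$ is independent of $v$: for $\overline v \succ v$ in $V$ and $s' > s$, either $\pi(s') = \pi(s)$ and both sides of the single-crossing inequality for $\overline v^K, v^K$ vanish, or $\pi(s') > \pi(s)$, and the original single-crossing inequality of $V$ applied at the alternatives $\pi(s) < \pi(s')$ transfers verbatim.

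The main obstacle is that one fixed $K$ must simultaneously give $\varepsilon v(m)$ accuracy for valuations of all magnitudes: a grid calibrated only to $V^*$ is too coarse for low-$v(m)$ valuations, while a uniformly fine grid blows up the size. Dyadic bucketing in $v(m)$ resolves this by spending only $O(1/\varepsilon)$ thresholds per magnitude scale, and single-crossing (through marginal dominance) ensures that each bucket's top representative correctly governs every other valuation in its bucket, so the $O(b)$ bucket contributions fit within the size and query budget.
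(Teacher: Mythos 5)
Your proof is correct, and it shares the paper's high-level blueprint: pick $poly(b)$ representative valuations at geometrically spaced levels of $v(m)$, add value-level thresholds in $[m]$ for each representative, and use single-crossing to transfer the representatives' guarantee to the whole domain; your verification of the second claim (the projection map $\pi$ is valuation-independent, so the single-crossing inequality at $\pi(s)<\pi(s')$ transfers verbatim) is the same as the paper's. The error-transfer step, however, goes in the opposite direction and rests on a different instantiation of single-crossing. The paper spaces representatives $u_1\prec u_2\prec\cdots$ by factors $(1+\tfrac{\varepsilon}{2})$ in $u(m)$, serves each $v$ by the largest representative $u$ \emph{below} it, and bounds $v(s)-u(s)\le v(m)-u(m)\le \tfrac{\varepsilon}{2}u(m)$ by applying single-crossing at the pair $(s,m)$, then adds a multiplicative value grid per representative; this yields $|K|=O(b/\varepsilon)\cdot O(b/\varepsilon)$. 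You use $O(b)$ dyadic buckets, serve each $v$ by the \emph{top} valuation of its bucket, and bound $v(s)-v(\pi(s))\le v^{top}_j(s)-v^{top}_j(\pi(s))$ by applying single-crossing directly at the pair $(\pi(s),s)$, so an additive grid of $O(1/\varepsilon)$ thresholds per bucket suffices and $|K|=O(b/\varepsilon)$. Your variant produces a smaller sketch and uses the single-crossing definition more directly; the paper's below-representative choice buys the pointwise inequality $v^K\ge u^K$ for free, but your argument never needs it. Both constructions fit within the stated $poly(\tfrac{1}{\varepsilon},\log m,b)$ time and query budget.
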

We prove Lemma \ref{lemma-k-optimal} in
Subsection \ref{subsec-reduce-k-minded}. We show how the lemma implies the theorem in Subsection \ref{subsec-all-alg}.


\subsection{Proof of Lemma \ref{lemma-k-optimal}} \label{subsec-reduce-k-minded}

We gradually construct $K$, initializing $K= \{0\}$.
Let $u_1$ be the first valuation in $V$ that is not identically zero. 
For this valuation $u_1$, we add to $K$ the
first quantity $s_1$ that satisfies that $u_1(s_1)>0$. Then, we proceed by adding to $K$ the smallest quantity $s_2$ such that $u_1(s_2)\ge (1+\frac{\varepsilon}{2})u_1(s_1)$, then we add to $K$ the smallest quantity $s_3$ such that $u_1(s_3)\ge (1+\frac{\varepsilon}{2})u_1(s_2)$, and so on.  Observe that since each number is rational and represented by $b$ bits, up until now $|K|\leq \log_{1+\frac{\varepsilon}{2}} 2^b=poly(\frac{1}{\varepsilon},b)$ quantities.

 Now, let $u_2$ be the smallest valuation such that $u_2(m)\ge(1+\frac{\varepsilon}{2})u_1(m)$. 
 Similarly to before, we add to $K$ the smallest quantity $s_1$ such that $u_2(s_1)>0$, then the smallest quantity $s_2$ such that $u_2(s_2)\ge (1+\frac{\varepsilon}{2})u_2(s_1)$, and so on. We then choose $u_3$ to be the {smallest} valuation such that $u_3(m) \ge(1+\frac{\varepsilon}{2})u_2(m)$, and proceed until we finish processing the entire domain  $V$.
 Denote with $U\subseteq V$ the valuations that we added quantities for.

Note that  $|U|\le \log_{1+\frac{\varepsilon}{2}} 2^b$. For each $u\in U$, we add at most $poly(\frac{1}{\varepsilon},b)$ quantities to $K$. Thus, when the process terminates, $|K|=poly(\frac{1}{\varepsilon},b)$. For the time and query bound, note that we apply binary search over $[m]$ to find each quantity $s_j$ given $s_{j-1}$, and similarly we also
apply binary search over the domain $V_i$ to find each $u_i$ given $u_{i-1}$. Since $V$ is single crossing, $|V|\leq 2^b$, so the running time of the binary search is $\mathcal O(b)$. \footnote{To see this, let $v' \succ v \in V$ be two valuations. Since the $v,v'$ are different and $V$ is single crossing, there is some $s$ such that $v'(s)>v(s)$. 
$v'(m)-v'(s)\geq v(m)-v(s)$ which implies that $v'(m)>v(m)$. Thus every valuation in $V_i$ gives a different value to the grand bundle. $|V_i|\leq 2^b$ follows since we use $b$ bits to represent values.}

We now prove that $K$ has the claimed properties. For the first property, fix some valuation $v\in V$ and $s\in [m]$. Let $u\in U$ be the largest valuation such that $v\succeq u$. 
By the single-crossing definition: $$v(m)-v(s)\ge u(m)-u(s) \implies v(m)-u(m)
\ge v(s)-u(s)$$ 
Therefore, since by construction $v(m)-u(m)\leq  \frac{\varepsilon}{2}\cdot u(m)$ we have that $v(s)-u(s)\leq  \frac{\varepsilon}{2}\cdot u(m)$ as well.

Next, we want to show that $u$ is \textquote{close} to $u^K$. Let $s^\ast \in K$ be the largest quantity in $K$ such that $s\ge s^\ast$. Note that such a quantity necessarily exists because $0\in K$.   
Note that by construction  $u^K(s)=u(s^\ast)$. Thus, by the construction we also have that:
\begin{equation*}\label{eq-used-close}
u(s)-u^K(s)=u(s)-u(s^\ast)\le \frac{\varepsilon}{2}\cdot u(s^\ast) \le \frac{\varepsilon}{2}\cdot u(m)
\end{equation*}
In addition, note that by definition, $v \succeq u$ implies that for every $x\in [m]$, $v(x)\ge u(x)$, so by construction $v^K(s)\ge u^K(s)$.
Therefore:
\begin{align*}
v(s)-v^K(s) &= v(s)-u(s)+u(s)-v^K(s) \\
&\le (v(s)-u(s))+(u(s)-u^K(s)) &(v^K(s)\ge u^K(s)) \\
&\le \varepsilon\cdot u(m) \\
&\le \varepsilon\cdot v(m)
\end{align*}

Finally, we observe that $V^K$ is a single-crossing domain that admits the same order as $V$. Consider two valuations $\overline v\succ v$ from the single-crossing domain and two quantities, $s$ and $r$. We want to prove that $\overline v^K(s)-\overline v^K(r)\ge v^K(s)-v^K(r)$. 
Let $s^K,r^K$ be the largest quantities in $K$ that are at most $s,r$, respectively. Note that:
$
\overline v^{K}(s)-\overline v^{K}(r)=
\overline v^{K}(s^K)-\overline v^{K}(r^K)=\overline v(s^K)- \overline v(r^K)
$ and that analogously 
$
v^{K}(s)-v^{K}(r)=
v(s^K)-v(r^K)
$. Observe that $\overline v(s^K)- \overline v(r^K) \ge v(s^K)- v(r^K)$ since $V$ is single-crossing.
Combining the aforementioned equalities gives the proof.

\subsection{Proof of Theorem \ref{thm-alg}} \label{subsec-all-alg}
\paragraph{The Mechanism}
For each player $i$, the mechanism obtains the set $K_i$ as guaranteed by Lemma \ref{lemma-k-optimal}
where we use 
the {accuracy} parameter $\frac{\varepsilon}{2n}$. 
We can now run any incentive-compatible $\alpha$-approximation mechanism $M$ for $(\max_i|K_i|)$-minded single-crossing valuations on the instance $(v_1^{K_1},\ldots,v_n^{K_n})$, allocate and charge accordingly to obtain an {incentive-compatible} mechanism with a comparable approximation ratio.\footnote{We only have to make sure that the number of items that player $i$ gets in $M$ is in $K_i$. This is easy to achieve by rounding down the allocation of each player to the nearest quantity in $K_i$. Note that the incentive compatibility and the approximation guarantee of $M$ are preserved.} Here, we use and analyze the mechanism of Theorem \ref{thm-k-alg} with a precision parameter $\frac{\varepsilon}{2}$. 


\paragraph{The Approximation Guarantee.}
Fix a valuation profile $(v_1,\ldots,v_n)$ and let $(\s_1,\ldots,\s_n)$ be the output of $M$. Let $(o_1,\ldots,o_n)$ be a welfare-maximizing allocation. 
Denote the welfare of $(\s_1,\ldots,\s_n)$ with $ALG$ and the welfare of $o_1,\ldots,o_n$ with $OPT$. 
Observe that: 
\begin{align*}
    ALG &=  \sum_i v_i(s_i) \\
    &\ge \sum_i v_i^{K_i}(s_i) \\
    &\ge (1-\frac{\varepsilon}{2})\cdot \sum_i v_i^{K_i}(o_i) &\text{(by Theorem \ref{thm-k-alg})} \\
    &\ge (1-\frac{\varepsilon}{2})\cdot \Big(\sum_i \big[v_i(o_i)-\frac{\varepsilon}{2n}\cdot v_i(m)\big]\Big) &\text{(by Lemma \ref{lemma-k-optimal})} \\
    &\ge (1-\frac{\varepsilon}{2})\cdot
    \Big(OPT -\frac{\varepsilon}{2n}\cdot n\cdot  OPT\Big) &(v_i(m) \le OPT) \\
    &\ge (1-\varepsilon)\cdot OPT 
\end{align*}
\paragraph{Incentive Compatibility.} Note that none of the sets $K_i$ depends on the valuation of player $i$, that each player $i$ can only be assigned a number of items that is in $K_i$ and that values of quantities that are not in $K_i$ do not affect the allocation. Thus, the incentive compatibility of $M$ for $k$-minded single-crossing bidders immediately implies that our mechanism is incentive-compatible as well.

\paragraph{Time and Query Complexity.} 
By Lemma \ref{lemma-k-optimal}, finding the set $K_i$ for every player $i$ takes time 
$poly(n,\frac{1}{\varepsilon},\log m,b)$
and $poly(n,\frac{1}{\varepsilon},\log m,b)$ extended value queries.    Given the set of quantities $K_i$, 
projecting  $v_i$ to $v_i^{K_i}$ requires at most $poly(n,\frac{1}{\varepsilon},b)$ value queries and $poly(n,\frac{1}{\varepsilon}, \log m,b)$ running time. 
Now, the mechanism that is specified in Theorem \ref{thm-k-alg}
takes $poly(n,\frac{1}{\varepsilon},\log m,b)$ time, so the  
total
time and query complexity of the allocation rule  are $poly(n,\frac{1}{\varepsilon},\log m,b)$.





\section{Algorithms Beat Mechanisms in Single-Crossing Domains}
\label{sec-hard-new-obj}

In this section we prove the first separation between the power of polynomial-time algorithms and polynomial-time incentive-compatible mechanisms in a single-crossing setting. We prove the result for the setting of single-crossing multi-unit auctions.  

Formally, an \emph{objective function} is a function  $obj:V_1\times \cdots \times V_n\times \mathcal A\to \mathbb{R}$ that takes as input a valuation profile and an alternative, and outputs the score of the alternative given this  profile of valuations. A social choice function $f:V_1\times \cdots \times V_n\to \mathcal A$ is an \emph{$\alpha$-approximation} to an  objective function $obj:V_1\times \cdots \times V_n\times \mathcal A\to \mathbb{R}$ if for every $(v_1,\ldots,v_n)\in V_1\times\cdots\times V_n$ , it holds that
$
obj(v_1,\ldots,v_n,f(v_1,\ldots,v_n)) \ge \alpha \cdot 
\max_{A\in \mathcal A} \{obj(v_1,\ldots,v_n,A)\}
$.

Our interest in this section is in objective functions that can be implemented by an incentive-compatible mechanism but not in polynomial time (under standard complexity assumptions), e.g., welfare maximization. We present an objective function for which 
there exist a $\frac 1 2$-approximation polynomial-time algorithm, yet every polynomial-time incentive-compatible mechanism does not provide a finite approximation ratio\footnote{The constant $\frac 1 2$ is arbitrary and can be replaced by any other function of the input.}.
The downside of our construction is that the objective function is not natural. This is unavoidable to some extent as for the most natural objective function -- welfare maximization -- this paper shows that there is no gap between polynomial-time algorithms and their incentive-compatible counterparts.

We base our result on the hardness of TFNP, the class of total search problems. The class of TFNP contains subclasses such as PPAD, PLS, and CLS, and in particular many problems in TFNP that are widely assumed to be hard, e.g., finding a Nash equilibrium of a game and integer factoring.


\begin{theorem}\label{thm-alloc-hard}
Let $T$ be a problem in $TFNP$. There exists an objective function in a multi-unit auction with two single-crossing bidders such that:
\begin{enumerate}
\item\label{part-monotone} There exists an incentive-compatible mechanism that exactly optimizes the objective function. 
\item\label{part-approx-non-monotone}   There exists a (non-incentive-compatible) polynomial-time algorithm that provides an approximation ratio of $\frac 1 2$ to the optimum. 
\item  \label{part-mono-poly-impos} 
Computing the allocation function of any incentive-compatible mechanism that provides a finite approximation to the optimum 
is at least as hard as computing $T$ (up to polynomial factors).
\end{enumerate}
\end{theorem}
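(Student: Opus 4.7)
The plan is to reduce an arbitrary TFNP problem $T$ with polynomial-time verifier $V$ (where every instance $x$ admits a certificate $y$ of polynomial length with $V(x,y)=1$) to the problem of computing the allocation rule of any polynomial-time IC mechanism that provides a finite approximation to $obj$. The first task is to construct, for each TFNP instance $x$, single-crossing valuation domains $V_1, V_2$ and a number of items $m = 2^{p(|x|)}$ (with $p$ a suitable polynomial) so that candidate solutions $y$ correspond to quantities in $\{0,1,\ldots,m\}$. The valuations $v_1^{t_1}, v_2^{t_2}$ should be engineered so that: (i) the domains are single-crossing with respect to the natural orders on indices and quantities, (ii) a welfare-maximizing allocation $(s_1^\ast, s_2^\ast)$ has the property that the decoding of $s_1^\ast$ is a valid solution $y$ of $x$, and (iii) deciding whether a given allocation is welfare-maximizing reduces to a single invocation of $V$ and is thus polynomial-time computable. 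A natural candidate is a linear base valuation plus a small supermodular perturbation that rewards valid encodings; the delicate step is ensuring that the perturbation preserves the single-crossing inequality.

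I would then define the objective function as
\[
obj\bigl(v_1, v_2, (s_1, s_2)\bigr) = \begin{cases} 2, & (s_1, s_2)\text{ is welfare-maximizing}, \\ 1, & (s_1, s_2)\in E(v_1, v_2), \\ 0, & \text{otherwise,} \end{cases}
\]
where $E(v_1, v_2)$ is a polynomial-size, polynomial-time enumerable set of ``decoy'' allocations. The crucial design constraint on $E$ is that its elements are arranged non-monotonically in $\succ^v_1$: no allocation rule selecting only from $E$ can be monotone with respect to $\succ^v_1$ throughout the domain. Because welfare-max is polynomial-time verifiable, $obj$ is polynomial-time computable. Part~\ref{part-monotone} then follows immediately because VCG on welfare is IC and welfare-maximizing allocations are exactly those with $obj = 2$. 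Part~\ref{part-approx-non-monotone} follows because a polynomial-time algorithm can enumerate $E$ and output any allocation there, achieving $obj \geq 1 \geq \tfrac{1}{2}\cdot 2$.

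For Part~\ref{part-mono-poly-impos}, suppose $M$ is an IC mechanism with a finite approximation ratio to $obj$; then its allocation rule $f$ always produces an allocation in $E\cup\{\text{welfare-max}\}$. Proposition~\ref{prop-tie-breaking-cool} guarantees that $f(\cdot, v_2)$ is monotone up to tie-breaking on $V_1$ for every fixed $v_2$. By the non-monotone arrangement of $E$, there must exist inputs on which monotonicity can be satisfied only by outputting a welfare-maximizing allocation. Decoding the allocation returned by $M$ at such an input yields a valid certificate $y$, and this gives a polynomial-time reduction from $T$ to computing $f$.

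The principal obstacle is the joint construction of $V_1$, $V_2$, and $E$: the valuations must simultaneously (a) preserve the single-crossing inequality while implicitly invoking the TFNP verifier, (b) make welfare-maximization TFNP-hard yet polynomial-time verifiable, and (c) be paired with a decoy set non-monotone enough to force a welfare-max computation on TFNP-hard inputs but not so rich as to admit a monotone finite-approximation selection from $E$ alone. Calibrating these three constraints simultaneously is the technical heart of the argument, since small changes to the perturbation that embeds $V$ can either destroy single-crossing or allow a cheap monotone approximation that bypasses the TFNP instance.
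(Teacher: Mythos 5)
Your high-level architecture matches the paper's: a three-valued objective (score $2$ for welfare-maximizing allocations, score $1$ for a small set of decoys, $0$ otherwise), VCG for Part~\ref{part-monotone}, outputting a decoy for Part~\ref{part-approx-non-monotone}, and an argument that monotonicity forces the mechanism to always hit a score-$2$ allocation for Part~\ref{part-mono-poly-impos}. But what you submit is a plan, not a proof: you explicitly defer the construction of the single-crossing domains, the embedding of the TFNP verifier, and the decoy set, and you correctly identify this calibration as ``the technical heart of the argument'' --- which means the heart is missing. In particular, nothing in your write-up establishes that a domain can simultaneously be single-crossing, make welfare maximization $T$-hard, and keep ``is this allocation welfare-maximizing?'' polynomial-time decidable; nor do you exhibit a decoy set whose use provably conflicts with monotonicity up to tie-breaking.

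The paper resolves all three constraints with a very concrete construction that is simpler than the ``supermodular perturbation'' you gesture at. With $m=2^{n^c}$ so that bundle sizes encode candidate witnesses, Alice's $x$-th valuation is the affine function $10^n+10x+4sx$ plus a $+1$ bonus exactly when the string encoded by $s$ is a witness for $x$; Bob's $x$-th valuation is $10x+4sx$. Single-crossing holds because marginals are nondecreasing in $x$ (the $+1$ bonus is too small to break this), welfare maximization at $a=b$ requires locating a witness (hence is $T$-hard), and verifying welfare-maximality is a single call to the verifier. The decoy set is a single allocation (Bob gets everything, only when $a=b$), and the non-monotonicity is not an ``arrangement within $E$'' as you propose: it comes from comparing adjacent instances. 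At $(v_A^{a-1},v_B^{b})$ with $a-1<b$, the only positive-score allocation gives Alice one item, so finite approximation forces it; if at $(v_A^{a},v_B^{b})$ the mechanism took the decoy, Alice would drop to zero items while her marginal for the first item strictly increased, contradicting Proposition~\ref{prop-tie-breaking-cool}. Hence every IC finite-approximation mechanism must output a score-$2$ allocation everywhere, and decoding it at $(v_A^x,v_B^x)$ yields a witness for $x$. You should supply an explicit construction of this kind (or verify that your perturbation idea survives the single-crossing inequality) before the argument can be considered complete.
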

\begin{proof}
Denote the length of the input of $T$ by $n$ and recall that by definition, for every input there is a witness of length $n^c$, for some fixed integer $c>0$. We consider multi-unit auction valuations on $m=2^{n^c}$ items. For $s>1$, we associate every bundle that contains $s$ items with a binary string in the natural way, by using the binary representation of the number $s$.
The domain of each player consists of $2^n$ valuations. The $x$'th valuation in the domain $V_A$ of Alice is:
\begin{align*}
	v_A^x(s)=\left\{
	\begin{array}{ll}
 0, & s=0; \\
	10^n+10\cdot x+4\cdot s \cdot x, &s=1;\\
    10^n+10\cdot x+4\cdot s \cdot x, &\text{$s>1$ and  the string associated with $s$ is not a witness for $x$};\\
    10^n+10\cdot x+4\cdot s \cdot x +1, & \text{$s>1$ and the string associated with $s$ is a witness for $x$}.
	\end{array}
	\right.
\end{align*}
The $x$'th valuation of Bob in the domain $V_B$ is:
\begin{align*}
v_B^x(s)=\begin{cases}
0, \quad s=0; \\
10\cdot x+4\cdot s \cdot x, \quad s\ge 1.
\end{cases}
\end{align*}

Note that the domains of Alice and Bob satisfy the single-crossing property, because for every $\overline v\succ v$ in either $V_A$ or $V_B$, and for every quantity $s\in \{1,\ldots,m\}$, $\overline v(s)-\overline v(s-1) \geq  v(s)-v(s-1)$. 

Our objective function is as follows. 
Consider an instance $(v_A^{a},v_B^{b})\in V_A\times V_B$. If $a> b$, the allocation that gives all items to Alice gets a score of $2$. If $b>a$, then the allocation that gives $m-1$ items to Bob and one item to Alice gets a score of $2$. If $a=b$, then any allocation $(s,m-s)$ where $s$ is a witness for $a$ gets a score of $2$ (by the totality of $T$, there is at least one such allocation). In addition, when $a=b$ the allocation where Bob gets all items gets a score of $1$. Every other allocation gets a score of $0$.

Note that an allocation has a score of $2$ if and only if it maximizes the welfare.
Hence, the VCG mechanism implements
our objective function in an incentive-compatible way, proving Part \ref{part-monotone}.

As for Part \ref{part-approx-non-monotone}, consider this algorithm: in an instance $(v_A^{a},v_B^{b})\in V_A\times V_B$, if $a> b$ allocate Alice all items, and if $b>a$ allocate Alice one item and Bob $m-1$ items. When $a=b$ allocate all items to Bob (and get a score of $1$). This allocation rule can be implemented in polynomial time but by Proposition \ref{prop-tie-breaking-cool} it
is not implementable because it is not monotone up to tie breaking (Alice's allocation in the instance $(v_A^{x},v_B^{x+1})$ is one item, but it is zero items in the instance  $(v_A^{x+1},v_B^{x+1})$, whereas $v_A^{x}(1)-v_A^{x}(0)<v_A^{x+1}(1)-v_A^{x+1}(0)$).

Finally, we prove Part \ref{part-mono-poly-impos}. 
{Let $f$ be
a social-choice function that provides a finite approximation to the objective function.} We will show that $f$ necessarily maximizes the objective function. We
assume for the sake of contradiction that there exist two valuations $(v^a_A,v^b_B)$ such that $f$ does not output an allocation with score $2$.  Since by assumption $f$ provides a finite approximation to the welfare it must output an allocation with score $1$ given $(v^a_A,v^b_B)$, so by construction, $a=b$. Note that
Bob must get all the items in this case.

We proceed with case analysis.
If $a>0$, then a valuation $v_A^{a-1}$ exists, so given the instance $(v^{a-1}_A,v^b_B)$ we have that $a-1< b$ and thus Alice must be allocated one item by $f$ for the approximation guarantee to hold. {However, by construction $v_A^{a}(1)-v_A^{a}(0)>v_A^{a-1}(1)-v_A^{a-1}(0)$, so $f$ is not monotone up to tie breaking. By Proposition \ref{prop-tie-breaking-cool}, it is not implementable.}   
  
Similarly, if $v_A^{a}$ is the smallest valuation in $V_A$, then a valuation $v_B^{b+1}$ necessarily exists. In this case, in the instance  $(v^{a}_A,v^{b+1}_B)$ we have that $a< b+1$ and thus Bob wins $m-1$ items, which is contradiction to
{implementability due to the same reasons.}
Therefore, every {incentive-compatible} algorithm that provides a finite approximation to the objective function must maximize the objective function. Observe that maximizing the objective function is at least as hard as finding a witness for $T$, because for every instance $x$ of length $T$, the optimal allocation for $(v_A^x,v_B^x)$ outputs an allocation $(s,m-s)$ such that $s$ is a witness for $x$. Thus, computing an incentive-compatible algorithm for the objective function  
is at least as hard as solving $T$, which concludes the proof.
\end{proof}
We can also use this proof to derive a proof of hardness of welfare maximization in single-crossing multi-unit auctions with only two players. In contrast, recall that the hardness result for single-minded bidders necessarily requires many players, since with single-minded bidders exhaustive search takes time that is exponential in the number of players. 

\begin{proposition}
    \label{cor-2-player-hard}
Finding the welfare-maximizing allocation for two bidders with valuations in a single-crossing domain is NP-hard.  
\end{proposition}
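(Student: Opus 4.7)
The plan is to reduce SAT to welfare maximization in two-bidder single-crossing multi-unit auctions using the construction from the proof of Theorem~\ref{thm-alloc-hard}. The key observation is that in that construction, allocations with score $2$ in the synthetic objective are precisely the welfare-maximizing allocations, and on the diagonal input $(v_A^x,v_B^x)$ every welfare-maximizing allocation encodes a witness for $x$. Hence a welfare-maximization oracle doubles as a witness-finding oracle; the only change needed is to specialize the underlying problem $T$ to SAT (viewed as a search problem) instead of an arbitrary TFNP problem.

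Given a SAT formula $\phi$ with $n^c$ variables, identify $\phi$ with a bit-string index $x$ of length $n$, set $m=2^{n^c}$, and instantiate the construction with ``the string associated with $s$ is a witness for $x$'' meaning ``the binary representation of $s$ is a satisfying assignment of the formula $\phi$ encoded by $x$''. The single-crossing property of the resulting two domains $V_A,V_B$ follows verbatim from the verification inside the proof of Theorem~\ref{thm-alloc-hard} (the marginals are the same, and the indicator of ``witness for $x$'' only adds $1$, which preserves the inequality). Moreover, each extended value query can be answered in polynomial time in $|\phi|$, since checking a satisfying assignment is in P; so the whole reduction, including the oracle answers supplied to the hypothetical algorithm, runs in polynomial time.

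Now apply a hypothetical polynomial-time welfare maximizer to $(v_A^x,v_B^x)$. By the welfare bookkeeping carried out in the proof of Theorem~\ref{thm-alloc-hard}, if $\phi$ is satisfiable then every welfare-maximizing allocation has the form $(s,m-s)$ with $s\in\{2,\dots,m-1\}$ and with $s$'s binary encoding a satisfying assignment of $\phi$---these are precisely the splits that collect the ``$+1$'' in $v_A^x$---whereas if $\phi$ is unsatisfiable, no allocation attains that optimum. Given the algorithm's output we can verify in polynomial time whether Alice's share falls in the witness range and encodes a satisfying assignment, thereby deciding SAT in polynomial time. The only mild obstacle is to ensure that every satisfying assignment is representable by an $s$ in the allowed range $\{2,\dots,m-1\}$, which we handle by the standard padding trick of encoding witnesses as $n^c{+}1$-bit strings with a forced leading $1$; this preserves all the inequalities and the single-crossing verification, and completes the NP-hardness reduction with only two bidders.
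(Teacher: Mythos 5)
Your proposal is correct and follows essentially the same route as the paper, which proves Proposition \ref{cor-2-player-hard} by adapting the construction of Theorem \ref{thm-alloc-hard}, replacing the TFNP problem $T$ with an NP-hard problem and the synthetic objective with welfare maximization. Your instantiation with SAT, the observation that extended value queries are efficiently answerable, and the padding fix for the range of witness bundles simply spell out the details the paper leaves implicit.
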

The proof is a straightforward adaption of the proof of Theorem \ref{thm-alloc-hard} by replacing the problem $T$ with any $NP$-hard problem and replacing the objective function with welfare maximization. 

\section*{Acknowledgements}
We thank Liad Blumrosen for the helpful discussions. Part of the work done while the authors were affiliated with Microsoft Research.

\bibliographystyle{alpha}
\bibliography{mainarxiv}

\appendix
\section{Another Monotone FPTAS for Single-Minded Bidders}\label{sec-warmup}
The problem of welfare maximization for single-minded bidders admits a monotone fully polynomial-time approximation scheme (FPTAS)
\cite{briest}.  
In this section we present another monotone FPTAS for the single-minded problem, 
illustrating one of the main ideas that we use in the approximation scheme for 
{single-crossing}
bidders.
In addition,
we remind that \cite{briest} have presented a  monotone FPTAS algorithm for a wide class of problems: multi-unit auctions, job scheduling with deadlines and more. The algorithm that we describe below applies to all those settings as well. 

Note that every single-minded valuation $v_i$ can be represented by a value parameter $x_i$ and a quantity parameter $q_i$ such that for every $s\ge q_i$ it holds that $v_i(s)=x_i$, and $v_i(s)=0$ for every $s<q_i$. 
We say that bidder $i$ is a \emph{known single-minded
bidder} if his domain of valuations $V_i$ satisfies that there is a single quantity $q_i$ which is the quantity parameter of all the valuations in $V_i$. We say that bidder $i$ is \emph{unknown single-minded bidder} if there exist valuations in $V_i$ that differ in their quantity parameters.    
We remark that 
the algorithm that we describe below does 
not only solve the problem for \emph{known} single-minded bidders, but rather solves the more general problem of \emph{unknown} single-minded  bidders.\footnote{Note that a domain of unknown single-minded bidders does \emph{not} necessarily satisfy the single-crossing property.} 

Let $(x_1,q_1),\ldots,(x_n,q_n)$ be the valuations of $n$ single-minded bidders. 
{For simplicity, we assume that the values of all players are integers in $\{0,1,\ldots,poly(m)\}$.} 
Given an allocation $(s_1,\ldots,s_n)$, we say that a player \emph{wins} if $s_i\ge q_i$, and otherwise we say that he \emph{loses}.


It well known that  rounding the valuations to 
multiples of 
$\delta=\dfrac{\epsilon\cdot v_{max}}{n}$, where $v_{max}=\max_i v_i(m)$ 
and then running a dynamic program to find the welfare-maximizing allocation yields a fully polynomial-time approximation scheme. However, it is not monotone: it could be that a player wins his desired quantity when his value is $x_i$, but when he increases it to be $\overline x_i>x_i$, then he no longer wins because of the effect that he has on the rounding parameter $\delta$ through the report of the maximal value. 

The monotone approximation scheme of \cite{briest} uses the following two  techniques to  handle this issue: the 
 social-choice function that they describe is the one with the maximum welfare among an infinite set of 
 {bitonic}\footnote{{Roughly speaking, bitonicity (originally defined by \cite{Mualem-nisan}) means that two conditions hold: first, if a losing player increases his value and still loses, then the welfare does not increase. If a winning player increases his value he still wins and the welfare does not decrease.}}
 social-choice functions, 
 {where} the bitonicity
 of each function in the set implies that the maximizer is  monotone. However, {it is not immediate how this social-choice function can be implemented in polynomial time.
To handle the issue of maximization over an infinite set,} they have shown that it suffices to iterate over a \textquote{small} subset of the set of 
{bitonic} functions to find the maximizer.

Our solution addresses the monotonicity problem of the allocation of the player with the highest value in a more straightforward manner.  We choose the rounding parameter $\delta$ to be the largest power of $4$ that is smaller than $\frac{\epsilon\cdot v_{max}}{3n}$. Then, we round down all valuations $v_1,\ldots,v_n$ to have values that are multiples of $\delta$, and we denote the rounded valuations  with $ v_1^{\delta},\ldots,v_n^{\delta}$. 

{We now define new valuations $u_1,\ldots,u_n$, in which the player with the maximum value, which we call $top$, wins a reward. if there is more than one such player, ties are broken in favor of the lowest index player. For player $top$, we set $u_{top}^\delta(q_{top})=v_{top}^\delta(q_{top})+2n\delta$. For every player $j$ other than $top$, we set $u_j=v_j^\delta$.}  Then, we output a welfare-maximizing allocation for $u_{1},\ldots,u_{n}$, using a standard \textquote{knapsack-like} dynamic program (see \cite{nisanhandbook2015}). 
{Similarly to \cite{briest}, we compute the \textquote{threshold} payment for every player $i$ with valuation $(x_i,q_i)$ by performing a binary search on the valuations with quantity $q_i$ and values below $x_i$.}

 Now, showing that this algorithm $(1-\epsilon)$-approximates the welfare and that its running time is $poly(n,\log m,\frac{1}{\epsilon})$ 
 is straightforward.
We now explain why it is also monotone. Fix a player $i$,  and the valuations $v_{-i}=(x_1,q_1),\ldots,(x_{i-1},q_{i-1}),(x_{i+1},q_{i+1}),(x_n,q_n)$ of players $1,\ldots,i-1,i+1,\ldots,n$. Assume that player $i$ wins his desired quantity $q_i$ if he bids $v_i=(x_i,q_i)$.
We denote the allocation 
$(v_i,v_{-i})$ with 
 $s^{win}$. 
Let $\overline v_i$ be a valuation such that $\overline v_i=(\overline x_i,\overline q_i)$, where $\overline x_i\ge x_i$ and $\overline q_i\le q_i$. 

Now, we want to show that player $i$ wins  his desired bundle $\overline q_i$ given $\overline v_i$. For illustration, we explain only the \textquote{interesting} case, where 
the rounding parameter changes to be $\overline \delta$ instead of $\delta$ where player $i$ bids $\overline v_i$. 
Note that the welfare of $s^{win}$
increases by at least $2n(\overline \delta - \delta)$ (because $s^{win}$ allocates to player $i$ his desired bundle)
and decreases by at most $n\overline \delta$ due to the increase in the rounding factor. We denote with $(u_i^\delta,u_{-i}^\delta)$ and
$(\overline u_i^{\overline \delta},u_{-i}^{\overline \delta})$ the valuations following the rounding an the rewards given $(v_i,v_{-i})$ and with $(\overline v_i,v_{-i})$ respectively.  
We remind that by construction $\overline\delta\ge 4\delta$, so:
\begin{equation}\label{eq-alloc-win-increase}
u_i^\delta(s^{win}_i)+\sum_{j\neq i} u_j(\s^{win}_j)
 < \overline u_i^{\overline \delta}(s^{win}_i) + 
    \sum_{j\neq i} u_j^{\overline \delta}(\s^{win}_j)
\end{equation}
I.e., the welfare of $s^{win}$ strictly increases when increasing the rounding parameter, because $\overline \delta \ge \delta$ implies that
the additive term  added to the valuation of the player with the maximum value \textquote{compensates} for the loss of value due to the increased rounding factor. 
Let $s^{lose}$ be some allocation where player $i$ wins less than $\overline q_i$ items. 
Observe that: 
\begin{align*}
\overline u_i^{\overline \delta}(s^{lose}_i) + 
\sum_{j\neq i} u_j^{\overline \delta}(\s^{lose}_j) 
 &\le 
\sum_{j\neq i} u_j^{\overline \delta}(\s^{lose}_j) &(u_i^{\overline \delta}(s_i^{lose})=0)\\   
&\le u_i^{\delta}(s^{lose}_i) + 
 \sum_{j\neq i} u_j^{\delta}(\s^{lose}_j) 
&\text{($\overline\delta$ is a multiple of $\delta$)}
    \\
&\le  u_i^{\delta}(s^{win}_i) + 
    \sum_{j\neq i} u_j^{\delta}(\s^{win}_j)
 &\text{($s^{win}$ maximizes the welfare given $(u_i^\delta,u_{-i}^\delta)$)} \\
&< \overline u_i^{\overline \delta}(s^{win}_i) + 
    \sum_{j\neq i} u_j^{\overline \delta}(\s^{win}_j) &\text{(by Equation (\ref{eq-alloc-win-increase}))}
\end{align*}
Therefore, given $(\overline u_i^{\overline \delta}, u_{-i}^{\overline \delta})$, the dynamic program outputs an allocation where player $i$ wins at least $\overline q_i$ items, which completes the proof.  








\section{Single-Crossings Domains Do Not Admit $k-$Sketches}\label{app:no-sketch}
In this section  we prove that valuations in single-crossing domains 
cannot be approximated by 
fixing any number of quantities smaller than $m$, and projecting the valuations on these quantities to get a $k$-sketch.
A $k$-sketch of a domain $V_i$ is a set 
of bundles $s_1^i,\ldots,s_k^i$ such that for every bundle  $s^i_t \leq j < s^i_{t+1}$ and a single-crossing valuation $v\in V$, $v(s^i_t) \leq v(j) \leq (1+\varepsilon)\cdot v(s^i_t)$. The construction we present below 
demonstrates that there exist single-crossing domains that do not even admit an $(m-1)$-sketch. 
Note that finding such a domain is straightforward if  we allow valuations that grow exponentially. For example,  even a domain that consists of the single valuation $v(s)=2^{s}-1$ does not admit any $(m-1)$-sketch. 
In contrast, in the construction presented below, all the values are rational numbers where both the numerator and denominator are integers of size at most $poly(m)$.
\begin{lemma}
There exists a single-crossing domain $V_i$ where all values are integers in the range $\{0,1,\ldots,poly(m)\}$  that does not admit any $(m-1)$-sketch. 
\end{lemma}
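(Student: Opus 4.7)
The plan is to exhibit an explicit single-crossing domain of $m$ valuations, each taking only small integer values, for which \emph{every} choice of $m-1$ benchmark bundles fails to approximate every valuation in the domain. The domain I would use is
$$
V_i=\{v^1,\ldots,v^m\},\qquad v^t(s)=\max\{0,\;s-(m-t)\}.
$$
All values lie in $\{0,1,\ldots,m\}$, and $v^t$ is the valuation that is identically zero up to quantity $m-t$ and then increases by one per additional item. Order the valuations by $v^1\prec v^2\prec \cdots\prec v^m$.

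The first step is to verify single crossing with respect to this order (and the natural order on bundles). The marginal of $v^t$ at quantity $s$ equals $1$ when $s>m-t$ and $0$ otherwise. Hence for $t'>t$ the support of positive marginals of $v^{t'}$ contains that of $v^t$, which gives $v^{t'}(s)-v^{t'}(s-1)\geq v^t(s)-v^t(s-1)$ for every $s$. Summing these inequalities over an interval immediately gives $v^{t'}(s')-v^{t'}(s)\geq v^t(s')-v^t(s)$ for all $s'>s$, which is exactly the multi-unit single-crossing condition.

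The second step is to show that no set of $m-1$ bundles suffices as a sketch. Given any $K\subseteq\{0,1,\ldots,m\}$ with $|K|=m-1$, at least two bundles in $\{0,1,\ldots,m\}$ are missing, so in particular there is some $j^{*}\in\{1,\ldots,m\}\setminus K$. Let $s_t=\max\{s\in K\cup\{0\}\mid s\leq j^{*}\}$; by choice of $j^{*}$ we have $s_t\leq j^{*}-1$. Now consider the valuation $v^{m-j^{*}+1}\in V_i$. By the definition of $v^t$ we have $v^{m-j^{*}+1}(s)=0$ for every $s\leq j^{*}-1$ and $v^{m-j^{*}+1}(j^{*})=1$. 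Consequently $v^{m-j^{*}+1}(s_t)=0$ while $v^{m-j^{*}+1}(j^{*})=1$, so the sketch inequality $v(j^{*})\leq (1+\varepsilon)\,v(s_t)$ fails for every finite $\varepsilon$, proving that $K$ is not a sketch of $V_i$.

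The construction and both verifications are elementary; the only mild subtlety is the boundary case $j^{*}=1$, handled uniformly by treating $0$ as an implicit sketch point (the value at $0$ is zero for every valuation, so adding $0$ to $K$ cannot help). The heart of the argument is simply the observation that a single-crossing domain can concentrate its ``first'' unit of value at any prescribed quantity, and the family $\{v^t\}$ sweeps this jump across all positions $1,\ldots,m$ while still respecting the single-crossing ordering.
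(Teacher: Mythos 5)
Your proof is correct, and it takes a genuinely different (and more elementary) route than the paper's. The paper builds a domain of only $\frac{m}{\log m}$ valuations, where the $x$'th valuation places consecutive powers of two $2,4,\ldots,m$ on a sliding window of $\log m$ quantities; omitting any quantity in a window then costs a multiplicative factor of $2$ even between two \emph{nonzero} values, and the union of the windows covers all of $[m]$. Your construction instead uses $m$ ``ramp'' valuations $v^t(s)=\max\{0,s-(m-t)\}$ and forces every quantity into the sketch via the jump from value $0$ to value $1$: the guarantee $v(j)\le(1+\varepsilon)v(s_t)$ is unsatisfiable when $v(s_t)=0<v(j)$, for any finite $\varepsilon$. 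Both constructions keep all values in $\{0,\ldots,poly(m)\}$, which is the point the lemma is making (the paper notes the statement is trivial if exponential values like $v(s)=2^s-1$ are allowed), and your single-crossing verification via containment of the supports of the (all-ones) marginals is clean and complete. The trade-off: your argument is shorter and rules out sketches for \emph{every} $\varepsilon$ at once, but it leans entirely on the degenerate zero-value case of the multiplicative guarantee; the paper's example additionally exhibits factor-$2$ gaps between strictly positive consecutive values, so it would survive a sketch notion that only demanded approximation at quantities where the valuation is already nonzero. For the lemma as stated, either construction suffices.
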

\begin{proof}
In the proof we demonstrate that leaving out one  coordinate always results in a loss of factor of at least $2$. In fact, the proof holds for any constant number, and our choice of $2$ is arbitrary. 
We gradually construct a domain of size $\frac{m}{\log m}$. We assume without loss of generality that $m$ is a power of $2$. 
The first non-zero valuation in it, $v^{[1]}$, is defined as follows:
\begin{gather*}
    \forall s\in \{1,\ldots,m-\log m\} \quad  v^{[1]}(s)= 0 \\
    \forall s\in \{m-\log m+1,\ldots,m\} \quad v^{[1]}(s)=2^{s-(m-\log m)} 
\end{gather*}
Note that the values of the last $\log m$ quantities 
are subsequent powers of two, so none of them can be left out from the sketch without losing a factor of $2$. Now, given a valuation $v^{[x-1]}$ we define $v^{[x]}$ as follows: 
\begin{gather*}
v^{[x]}(0)=0 \\
 \forall s \in \{1,\ldots,m-x\log m\}\quad v^{[x]}(s)=0 \\
    \forall s\in \{m-x\log m+1,\ldots,m-(x-1)\log m\} \quad v^{[x]}(s) =  2^{s-(m-x\log m)} \\
    \forall s \in \{m-(x-1)\log m + 1,\ldots, m\} \quad v^{[x]}(s)=v^{[x]}(s-1)+v^{[x-1]}(s)- v^{[x-1]}(s-1)
\end{gather*}
For an illustration of the domain $V_i$, see Table \ref{table:1}. Note that every valuation $v^{[x]}$  satisfies that the quantities in $\{m-x\log m +1,\ldots,m-(x-1)\log m\}$ have to be included the sketch. Therefore, all $m$ quantities have to be inside any sketch that gives an approximation better than $\frac{1}{2}$. In addition, it is easy to verify that the domain above is single crossing.  
\begin{table}[h]\scriptsize \centering 
\begin{tabular}{|| c|c|c|c| c| c| c |c |c |c |c  |c ||} 
 \hline \hline
&$0$ & \ldots & \ldots &\ldots & $m-2\log m+1$ &  \ldots & $m-\log m$ & $m-\log m +1$ & $m-\log m +2$ & \ldots &  $m$   \\ 
 \hline\hline
 $v^{[1]}$&  $0$ & \ldots &\ldots &\ldots & $0$  & \ldots & $0$ & $2$ & $4$ &  \ldots & $m$  \\ 
 \hline
  $v^{[2]}$ & $0$ &\ldots &\ldots & $0$ & $2$ & \ldots & $m$ & $m+2$ & $m+4$ &  \ldots & $2m$  \\
 \hline
 \vdots &   \vdots & \vdots  &\vdots  &\vdots&\vdots & \vdots & \vdots & \vdots & \vdots & \vdots &  \vdots \\
 \hline
  $v^{[\frac{m}{\log m}]}$ & $0$ & \ldots &\ldots & \ldots & $m(\frac{m}{\log m}-2)+2$ &  \ldots & $m(\frac{m}{\log m}-1)$ & $m(\frac{m}{\log m}-1)+2$ & $m(\frac{m}{\log m}-1)+4$ &  \ldots & $\frac{m^2}{\log m}$ \\  [2ex] \hline 
\end{tabular}
\caption{The table above specifies the valuations in the domain $V_i$, which is a domain that does not admit an $(m-1)$-sketch. The cell in row $t$ and column $s$ specifies the value of $v_i^{[t]}$, the valuation at index $t$, for $s$ items.  
The valuations are ordered according to the order $\succ_i^{v}$ that is associated with $V_i$.}
\label{table:1}
\end{table}
\end{proof}

\section{Missing Proofs} \label{sec-missing-proofs-prem}
In this section we present proofs that were missing from Section \ref{sec-preliminaries}. 
Before we begin, we repeat here a well known result that we will use in our proofs. 
For that, we remind that a function $f:V_i\to \mathcal A$ is \emph{weakly monotone for player $i$} if for every $v_i,v_i'\in V_i$, if $f(v)=a$ and $f(v')=a'$, it implies that $v'(a')-v'(a)\ge v(a')-v(a)$. It is easy to see \cite{lavi-mualem-nisan} that every function $f:V_i\to \mathcal A$ that is implementable for player $i$ is weakly monotone. 

\subsection{Proof of Proposition \ref{prop-mono-iff-sc}}
Let $V_i$ be  a single-crossing domain with respect to $\succ_i^a,\succ_i^v$. 
We begin by proving that every monotone  function $f$ can be implemented in a single-crossing domain.
We present a price for every alternative and then show that the allocation rule with these prices is an incentive-compatible implementation of $f$.

For every $a_j$ in the image of $f$ such that $j\geq 2$, set the difference of the prices to
$P(a_j)-P(a_{j-1})=\inf_{v_i\in V_i}\{v_i(a_j)-v_i({a_{j-1}})|\ f(v_i)=a_j\}$ (note that $P(a_1)$ can be arbitrary).
Therefore, for every two alternatives $a_l\succ^a_i a_k$ in the image:
$$
P(a_l)-P(a_{k})=\sum_{j={k}+1}^{l} \inf_{v_i\in V_i}\{v_i(a_j)-v_i(a_{j-1})\hspace{0.25em}|\hspace{0.25em} f(v_i)=a_j\}
$$ 
To see that the payment function $P$ indeed implements $f$, let $v$ be a valuation of player $i$ such that $f(v)=a$. Our goal is to show that $a$ is a profit-maximizing alternative of player $i$. Suppose that the profit of $v$ for some $a_j$ is less than its profit on $a_{j-1}$: $v(a_j)-P(a_j)< v(a_{j-1})-P(a_{j-1})$. By rearranging, $v(a_j)-v(a_{j-1})<P(a_j)-P(a_{j-1})= \inf_{v_i\in V_i}\{v_i(a_j)-v_i({a_{j-1}})|\ f(v_i)=a_j\}$. Since it holds that $v(a_j)-v(a_{j-1})<v_i(a_j)-v_i({a_{j-1}})$ for all valuations $v_i$ with $f(v_i)=a_j$, $v$ precedes these valuations in the single crossing order. By the monotonicity of $f$, $a_j \succ^a_i a$. Similarly, if $v(a_j)-P(a_j)> v(a_{j-1})-P(a_{j-1})$, then $a \succ^a_i a_j$. That is, $v(a_j)-P(a_j)\geq v(a_{j-1})-P(a_{j-1})$ for every $a_j$ such that $a\succ_i^a a_j$ and $v(a_j)-P(a_j)\leq v(a_{j-1})-P(a_{j-1})$ for every $a_j$ such that $a_j\succ_i^a a$. I.e., the profit series $\{v(a_j)-P(a_j)\}_j$ does not decrease up to alternative $a$ and does not increase afterwards, hence $a$ is one of the profit maximizing alternatives, as needed.

For the other direction,
let $V_i$ be a monotone-implementability domain with respect to $\succ_i^{a},\succ_i^{v}$. If $V_i$ is not a single-crossing domain with respect to  $\succ^v_i,\succ^a_i$, 
  then there exist two valuations $v_i' {\succ_i^v} v_i\in V_i$ and two alternatives $a'{\succ_i^a} a$ such that $v_i'(a')-v_i'(a)< v_i(a')-v_i(a)$.
    {Let $f$ be the following monotone function: $f(u_i)=a'$ for every valuation $u_i{\succ_i^v} v_i$, and for any other $u_i$ let $f(u_i)=a$. } 
    Note that $f(v_i)=a$ and $f(v_i')=a'$, so $f$ is not weakly monotone as $v_i'(a')-v_i'(a)< v_i(a')-v_i(a)$. Hence, $f$ is not implementable for player $i$, and thus $V_i$ is not a
    {monotone-implementability domain.} 

\subsection{Proof of Proposition \ref{prop-tie-breaking-cool}}
Fix two valuations $v_i'{\succ_i^v} v_i$ and $a'{\succ_i^a} a$ such that $f(v_i')=a$ and $f(v_i)=a'$. 
Since $V_i$ is single-crossing with respect to $\succ_i^{v}$ and $\succ_i^{a}$, it holds that $v_i'(a')-v_i'(a)\ge v_i(a')-v_i(a)$. $f$ is implementable, hence weakly-monotone, so $v_i(a')-v_i(a)\ge v_i'(a')-v_i'(a)$. We get the proof by combining the two inequalities.

\end{document}